\documentclass[11pt]{article}

\usepackage[a4paper,margin=1in]{geometry}

\usepackage[T1]{fontenc}
\usepackage[utf8]{inputenc}
\usepackage{lmodern}
\IfFileExists{microtype.sty}{\usepackage{microtype}}{}

\usepackage{amsmath,amssymb,amsthm}

\usepackage{algorithm}
\usepackage{algpseudocode}
\providecommand{\Switch}[1]{\State \textbf{switch} #1}
\providecommand{\EndSwitch}{}
\providecommand{\Case}[1]{\State \textbf{case} #1}
\providecommand{\EndCase}{}

\usepackage{tikz}
\usetikzlibrary{arrows.meta,positioning,calc,decorations.pathreplacing,shapes.geometric,fit}
\IfFileExists{pgfplots.sty}{%
  \usepackage{pgfplots}
  \pgfplotsset{compat=1.18}
}{%
  \newenvironment{axis}[1][]{\par\small\itshape[pgfplots unavailable: plot omitted]\par}{}
  \newcommand{\addplot}[2][]{}
  \newcommand{\addlegendentry}[1]{}
  \newcommand{\pgfplotsset}[1]{}
}

\usepackage{booktabs}
\usepackage{multirow}
\usepackage{array}
\usepackage{tabularx}

\usepackage{listings}
\usepackage[colorlinks=true,linkcolor=blue!60!black,citecolor=blue!60!black,urlcolor=blue!60!black]{hyperref}
\IfFileExists{cleveref.sty}{%
  \usepackage[capitalise,noabbrev]{cleveref}
}{%
  \newcommand{\cref}[1]{\ref{##1}}
  \newcommand{\Cref}[1]{\ref{##1}}
}

\usepackage[numbers,sort&compress]{natbib}

\usepackage{enumitem}
\usepackage{xcolor}
\usepackage{caption}
\usepackage{subcaption}
\usepackage{xspace}

\newtheorem{theorem}{Theorem}[section]

\newtheorem{proposition}[theorem]{Proposition}

\theoremstyle{definition}
\newtheorem{definition}[theorem]{Definition}
\theoremstyle{remark}
\newtheorem{remark}[theorem]{Remark}

\newcommand{\sys}{\textsc{ACE Runtime}\xspace}
\newcommand{\acegf}{\textsc{ACE-GF}\xspace}
\newcommand{\zkace}{\textsc{ZK-ACE}\xspace}
\newcommand{\idcom}{\mathsf{id\_com}}
\newcommand{\REV}{\mathsf{REV}}
\newcommand{\HMAC}{\mathsf{HMAC}}
\newcommand{\HKDF}{\mathsf{HKDF}}
\newcommand{\Poseidon}{\mathsf{Poseidon}}

\title{\textbf{ACE Runtime --- A ZKP-Native Blockchain Runtime\\with Sub-Second Cryptographic Finality}}

\author{
  Jian Sheng Wang\\
  Yeah LLC\\
  \texttt{jason@yeah.app}
}

\date{March 10, 2026}

\begin{document}
\emergencystretch=1.5em   
\maketitle

\begin{abstract}
Existing high-performance blockchains such as Solana hard-code per-transaction
signature verification on the critical execution path, causing $O(N)$ signature
checking to become the primary throughput bottleneck and a major source of
hardware cost.  This paper presents \sys, a blockchain execution layer built
upon the identity--authorization separation primitive known as \acegf.  The
core innovation is a three-phase \emph{Attest--Execute--Prove} pipeline
architecture: authorization verification is shifted from cryptographic
signatures to lightweight HMAC-based attestations (${\sim}1\;\mu\mathrm{s}$
per transaction on commodity CPUs), while zero-knowledge proof generation is
moved entirely off the critical path.  The resulting architecture targets
400\,ms block times, ${\sim}600\;\mathrm{ms}$ hard (cryptographic)
finality, $O(1)$ block verification cost, and eliminates the GPU
requirement for all non-builder validators.  Analytical modeling on
standard server hardware projects conservative throughput of
${\sim}16{,}000$ transactions per second (TPS), scaling to
${\sim}32{,}000$ TPS with moderate engineering effort.  We present
detailed algorithms, a formal security analysis, and model-driven
quantitative comparisons indicating that \sys can achieve up to
${\sim}20\times$ reduction in finality latency and up to
${\sim}4{,}000\times$ reduction in per-block verification cost relative
to Solana at scale, under the stated modeling assumptions.
\end{abstract}

\medskip
\noindent\textbf{Keywords:} blockchain runtime, zero-knowledge proofs,
identity--authorization separation, sub-second finality, Groth16, HMAC
attestation, post-quantum readiness

\section{Introduction}
\label{sec:intro}

High-performance layer-one blockchains have made remarkable progress in
transaction throughput.  Solana~\cite{solana-whitepaper}, for example,
achieves 400\,ms block times and processes thousands of transactions per
second by employing a pipelined Transaction Processing Unit (TPU) that
offloads Ed25519 signature verification to GPUs.  Nevertheless, the
fundamental design of \emph{one cryptographic signature per transaction}
imposes an $O(N)$ verification cost that grows linearly with block size.
This bottleneck manifests in three critical ways:

\begin{enumerate}[leftmargin=*]
\item \textbf{Throughput ceiling.}  Even with GPU-batched verification at
  ${\sim}2\;\mu\mathrm{s}$ per signature, a block containing 100{,}000
  transactions requires 200\,ms of SigVerify alone, consuming half the
  block interval.

\item \textbf{Hardware cost.}  Every Solana validator---not just block
  producers---must operate a GPU capable of batched signature verification,
  raising the minimum hardware budget to approximately \$7{,}500 and
  monthly operating costs to \$1{,}000--\$3{,}000.

\item \textbf{Post-quantum vulnerability.}  Transitioning to post-quantum
  signature schemes such as ML-DSA-44~\cite{nist-pqc} would inflate
  per-transaction authorization data from 96 bytes (Ed25519 signature plus
  public key) to ${\sim}3{,}732$ bytes, increasing block sizes by over
  $38\times$ and rendering the current architecture infeasible.
\end{enumerate}

\subsection{Key Insight: Identity--Authorization Separation}

The \acegf framework~\cite{ace-gf} introduces a separation between
\emph{identity binding} and \emph{per-transaction authorization}.
Instead of attaching a full cryptographic signature to every
transaction, users derive a lightweight HMAC-based \emph{attestation}
from a root entropy value ($\REV$) via HKDF~\cite{rfc5869}.  The
cryptographic proof that the attestation is correctly bound to the
user's on-chain identity commitment ($\idcom$) is generated as a
zero-knowledge proof \emph{after} the block is published, rather than
before execution.

This separation transforms the verification model from ``one signature per
transaction'' to ``one succinct proof per block,'' enabling $O(1)$ block
verification regardless of the number of transactions contained.

\subsection{Contributions}

This paper makes the following contributions:

\begin{enumerate}[leftmargin=*]
\item \textbf{Attest--Execute--Prove pipeline.}  We design a three-phase
  execution pipeline that decouples lightweight attestation checking and
  transaction execution (on the critical path) from ZK proof generation
  (off the critical path), targeting 400\,ms block times with
  ${\sim}600\;\mathrm{ms}$ cryptographic hard finality
  (\cref{sec:pipeline}).

\item \textbf{Proof pipelining.}  We present a GPU-parallel proof
  generation scheme with tree-structured recursive aggregation that
  produces a single Groth16~\cite{groth16} proof per block within the
  inter-block interval (\cref{sec:proof-pipeline}).

\item \textbf{Two-tier finality model.}  We formalize a finality model
  with \emph{soft} finality (BFT voting, ${\sim}400\;\mathrm{ms}$) and
  \emph{hard} finality (ZK proof verification, ${\sim}600\;\mathrm{ms}$),
  and prove that hard finality is computationally irreversible under
  standard cryptographic assumptions (\cref{sec:finality}).

\item \textbf{Comprehensive quantitative evaluation.}  We provide
  prototype microbenchmarks and model-based analytical comparisons
  showing $O(1)$ block verification, up to $5\times$
  block-propagation bandwidth efficiency
  (${\sim}1.9$--$5\times$ total system bandwidth depending on
  witness-gossip scheduling), ${\sim}20\times$ faster hard finality
  on the normal path, and elimination of GPU requirements for
  non-builder validators (\cref{sec:evaluation}).
\end{enumerate}

\noindent
\sys builds upon and references the \acegf identity
framework~\cite{ace-gf}, the \zkace proving system~\cite{zk-ace}, the
AR-ACE account recovery protocol~\cite{ar-ace}, and the CT-DAP
cross-chain transfer mechanism~\cite{ct-dap}.

\section{Background and Motivation}
\label{sec:background}

\subsection{The ACE-GF Cryptographic Primitive}
\label{sec:acegf-primitive}

The Atomic Cryptographic Entity Generative Framework
(\acegf)~\cite{ace-gf} introduces a deterministic, multi-stream key
derivation architecture rooted in a single high-entropy secret called
the Root Entropy Value ($\REV$).  The $\REV$ is encapsulated using
Argon2id~\cite{argon2} with parameters $(m{=}4096,\; t{=}3,\; p{=}1)$
and is never stored in plaintext.

\begin{definition}[Multi-Stream Key Derivation]
\label{def:key-derivation}
Given a root entropy value $\REV$ and a purpose-specific information
string $\mathit{info}_i$, the $i$-th key stream is defined as
\begin{equation}
  k_i \;\leftarrow\; \HKDF\text{-}\mathsf{SHA256}(\REV,\;\mathit{info}_i,\;\mathit{salt}_i).
\end{equation}
\acegf defines seven canonical streams:
\begin{enumerate}[label=\textnormal{Stream \arabic*:},leftmargin=5em]
  \item $\HKDF(\REV, \texttt{"ACEGF-V1-ED25519-SOLANA"})$ for Solana signing,
  \item $\HKDF(\REV, \texttt{"ACEGF-V1-ED25519-POLKADOT"})$ for Polkadot,
  \item $\HKDF(\REV, \texttt{"ACEGF-V1-SECP256K1-EVM"})$ for EVM chains,
  \item $\HKDF(\REV, \texttt{"ACEGF-V1-SECP256K1-BTC"})$ for Bitcoin,
  \item $\HKDF(\REV, \texttt{"ACEGF-V1-SECP256K1-COSMOS"})$ for Cosmos,
  \item $\HKDF(\REV, \texttt{"ACEGF-V1-X25519-IDENTITY"})$ for end-to-end encryption,
  \item $\HKDF(\REV, \texttt{"ACEGF-V1-ML-DSA-44-PQC-IDENTITY"})$ for post-quantum signing.
\end{enumerate}
\end{definition}

\begin{definition}[Context Isolation]
\label{def:context-isolation}
For any two distinct context strings $c_1 \neq c_2$,
\begin{equation}
  \HKDF(\REV,\;\mathit{info},\;\mathit{context}{=}c_1) \;\neq\;
  \HKDF(\REV,\;\mathit{info},\;\mathit{context}{=}c_2),
\end{equation}
and the two derived keys are computationally unlinkable under the
pseudorandom function assumption on HKDF.
\end{definition}

\begin{definition}[ZK Identity Commitment]
\label{def:idcom}
The on-chain identity is represented by a Poseidon hash
commitment~\cite{poseidon}:
\begin{equation}
  \idcom = \Poseidon(\REV,\;\mathit{salt},\;\mathit{domain}).
\end{equation}
This 32-byte value is the \emph{only} identity artifact visible on-chain.
No public key is ever exposed, providing inherent resistance to
``harvest now, decrypt later'' quantum attacks.
\end{definition}

\subsection{Solana TPU Pipeline Analysis}
\label{sec:solana-analysis}

Solana's Transaction Processing Unit processes transactions through a
four-stage pipeline:
\begin{equation*}
  \mathsf{Fetch} \;\to\; \mathsf{SigVerify}\;(\text{GPU}) \;\to\;
  \mathsf{Banking} \;\to\; \mathsf{Broadcast}.
\end{equation*}
The SigVerify stage resides on the critical path and imposes $O(N)$ cost
that scales linearly with the number of transactions.
\Cref{tab:sigverify-cost} presents the per-transaction and per-block
costs of Ed25519 signature verification.

\begin{table}[t]
\centering
\caption{Ed25519 signature verification cost on the Solana pipeline.}
\label{tab:sigverify-cost}
\begin{tabular}{@{}lcc@{}}
\toprule
\textbf{Operation} & \textbf{CPU} & \textbf{GPU (batched)} \\
\midrule
Per-transaction verify & ${\sim}76\;\mu\mathrm{s}$ & ${\sim}2\;\mu\mathrm{s}$ \\
1{,}000-tx block & 76\,ms & 2\,ms \\
10{,}000-tx block & 760\,ms & 20\,ms \\
100{,}000-tx block & 7{,}600\,ms & 200\,ms \\
\bottomrule
\end{tabular}
\end{table}

Two critical observations motivate our design:

\begin{enumerate}[leftmargin=*]
\item \textbf{Serial dependency.}  SigVerify must complete \emph{before}
  execution begins, creating a hard serial dependency on the critical
  path.  Any increase in block size directly increases end-to-end latency.

\item \textbf{Universal GPU requirement.}  Because every validator must
  independently verify every signature, all validators require GPU
  hardware---not just block producers.  This raises the economic barrier
  to participation and centralizes the validator set.
\end{enumerate}

\subsection{Limitations of Existing ZK-Enhanced Chains}
\label{sec:existing-zk}

Several blockchain systems employ zero-knowledge proofs, but none
applies them to the identity and authorization layer in a way that
achieves $O(1)$ per-block verification:

\begin{itemize}[leftmargin=*]
\item \textbf{zkSync and StarkNet}~\cite{zksync,starknet} use ZK proofs
  for state compression and Layer-2 rollup verification.  Individual
  transaction signatures are still verified in $O(N)$ within the rollup
  sequencer before proof generation.

\item \textbf{Mina Protocol}~\cite{mina} employs recursive SNARK
  composition for constant-size chain state, but per-block signature
  verification remains $O(N)$.

\item \textbf{Aptos and Sui}~\cite{aptos,sui} achieve fast finality via
  DAG-based BFT consensus but retain $O(N)$ per-transaction signature
  verification on the critical path.
\end{itemize}

To our knowledge, no existing Layer-1 blockchain has achieved $O(1)$
identity-layer verification combined with sub-second hard finality.

\section{System Design}
\label{sec:design}

This section presents the core contribution of \sys: the
Attest--Execute--Prove pipeline architecture, the attestation
verification mechanism, and the proof pipelining strategy.

\subsection{Attest--Execute--Prove Pipeline}
\label{sec:pipeline}

\sys processes each slot through two phases that overlap across
consecutive slots.  \textbf{Phase~1} (Attest + Execute) resides on the
critical path and determines the block time.  \textbf{Phase~2} (Prove)
runs asynchronously on GPU hardware and produces a cryptographic
finality certificate that arrives during the subsequent slot.

\begin{algorithm}[t]
\caption{Attest--Execute--Prove Pipeline Scheduler}
\label{alg:pipeline}
\begin{algorithmic}[1]
\Require Transaction pool $\mathcal{T}$, current slot number $S$
\Ensure Published block $B_S$, finality certificate $\mathit{FC}_{S-1}$
\Statex
\Procedure{ProcessSlot}{$S$}
  \State $\mathit{txs} \gets \Call{SelectTransactions}{\mathcal{T},\;\mathit{max}{=}2000}$
  \Statex \Comment{\textbf{Phase 1a: Parallel AttestCheck} (${\sim}1$--$5\;\mu\mathrm{s}$/tx, CPU)}
  \ForAll{$\mathit{tx} \in \mathit{txs}$ \textbf{in parallel}}
    \State $k \gets \HKDF(\mathit{tx}.\mathit{rev\_proxy},\;\texttt{"ACEGF-V1-MEMPOOL-ATTEST"},\;\mathit{domain})$
    \State $\mathit{expected} \gets \HMAC\text{-}\mathsf{SHA256}(k,\;\mathit{tx}.\mathit{obj\_hash}\;\|\;\mathit{tx}.\mathit{domain})$
    \If{$\mathit{tx}.\mathit{attestation}.\mathit{credential} \neq \mathit{expected}$}
      \State \Call{Reject}{$\mathit{tx}$}
    \EndIf
  \EndFor
  \Statex \Comment{\textbf{Phase 1b: Parallel Execution} (${\sim}10$--$50\;\mu\mathrm{s}$/tx)}
  \ForAll{$\mathit{tx} \in \mathit{valid\_txs}$ via Sealevel-Enhanced scheduler}
    \State $\Delta_{\mathit{tx}} \gets \Call{Execute}{\mathit{tx}}$
  \EndFor
  \State $B_S \gets \Call{BuildBlock}{\mathit{valid\_txs},\;\Delta,\;\mathit{attestations}}$
  \State \Call{Broadcast}{$B_S$} \Comment{Soft finality upon $\frac{2}{3}$ votes}
\EndProcedure
\Statex
\Procedure{ProveAsync}{$B_S$} \Comment{\textbf{Phase 2: GPU-parallel, off critical path}}
  \State $\Pi \gets \emptyset$
  \ForAll{$\mathit{tx} \in B_S.\mathit{transactions}$ \textbf{in parallel on GPU}}
    \State $\mathit{pub} \gets [\mathit{tx}.\idcom,\;\mathit{tx}.\mathit{hash},\;\mathit{tx}.\mathit{domain},\;\mathit{tx}.\mathit{target},\;\mathit{tx}.\mathit{rp\_com}]$
    \State $\pi_{\mathit{tx}} \gets \mathsf{Groth16.Prove}(\mathcal{C}_{\zkace},\;w_{\mathit{tx}},\;\mathit{pub})$
    \State $\Pi \gets \Pi \cup \{\pi_{\mathit{tx}}\}$
  \EndFor
  \While{$|\Pi| > 1$} \Comment{Tree-structured recursive aggregation}
    \State $\Pi \gets \Call{PairwiseAggregate}{\Pi}$
  \EndWhile
  \State $\mathit{FC}_S \gets \bigl(\mathsf{Hash}(B_S),\;S,\;\Pi[0],\;\mathsf{Commit}(B_S.\mathit{id\_coms})\bigr)$
  \State \Call{Broadcast}{$\mathit{FC}_S$} \Comment{Hard finality}
\EndProcedure
\end{algorithmic}
\end{algorithm}

\Cref{alg:pipeline} details the complete pipeline logic.
\Cref{tab:critical-path} breaks down the timing of each stage and
identifies whether it lies on the critical path.

\begin{table}[t]
\centering
\caption{Critical path analysis of the Attest--Execute--Prove pipeline.}
\label{tab:critical-path}
\begin{tabular}{@{}lccc@{}}
\toprule
\textbf{Stage} & \textbf{Latency} & \textbf{Critical?} & \textbf{Bottleneck} \\
\midrule
AttestCheck & ${\sim}1$--$5\;\mu\mathrm{s}$/tx & Yes & CPU (lightweight) \\
Execute & ${\sim}10$--$50\;\mu\mathrm{s}$/tx & Yes & State I/O \\
Block publish & ${\sim}50\;\mathrm{ms}$ & Yes & Network \\
\midrule
Per-tx proof & ${\sim}15\;\mathrm{ms} \times 128$ parallel & No (GPU async) & GPU \\
2{,}000-tx total proof & ${\sim}240\;\mathrm{ms}$ & No (pipelined) & GPU \\
Tree aggregation & ${\sim}30$--$60\;\mathrm{ms}$ & No & CPU/GPU \\
FC publish & ${\sim}50\;\mathrm{ms}$ & No (next slot) & Network \\
\bottomrule
\end{tabular}
\end{table}

The critical path consists solely of AttestCheck, Execute, and Block
Publish, totaling approximately 300--400\,ms.  Proof generation is
fully asynchronous and overlaps with the subsequent slot's execution.

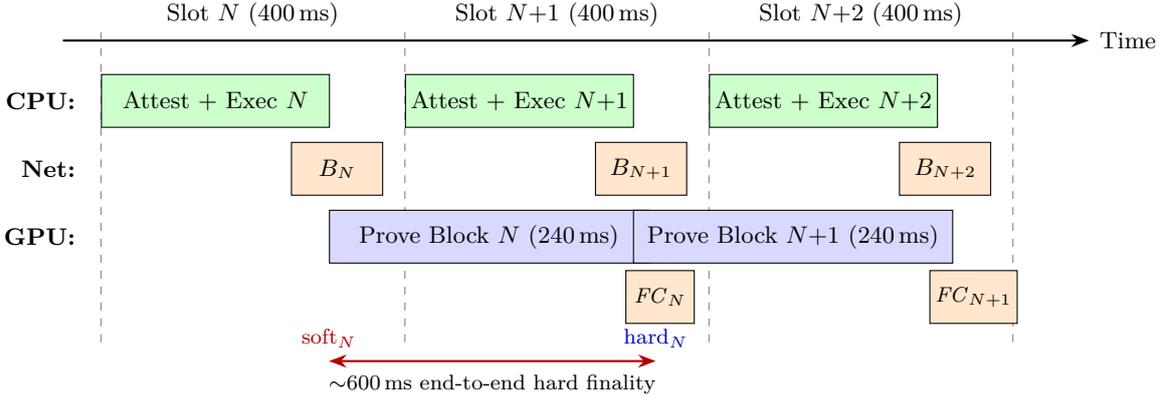
\begin{figure}[t]
\centering
\begin{tikzpicture}[
  >=Stealth,
  slot/.style={draw, minimum height=0.7cm, inner sep=2pt, font=\footnotesize},
  prove/.style={draw, fill=blue!15, minimum height=0.7cm, inner sep=2pt, font=\footnotesize},
  pub/.style={draw, fill=orange!20, minimum height=0.7cm, inner sep=2pt, font=\footnotesize},
  lbl/.style={font=\footnotesize\bfseries, anchor=east},
]
\draw[->, thick] (0, 2.8) -- (13.5, 2.8) node[right]{\footnotesize Time};
\foreach \x/\lab in {0.5/,4.5/,8.5/,12.5/} {
  \draw[dashed, gray] (\x, -1.2) -- (\x, 3.0);
}
\node[font=\footnotesize] at (2.5, 3.15) {Slot $N$ (400\,ms)};
\node[font=\footnotesize] at (6.5, 3.15) {Slot $N{+}1$ (400\,ms)};
\node[font=\footnotesize] at (10.5, 3.15) {Slot $N{+}2$ (400\,ms)};

\node[lbl] at (0.3, 2.0) {CPU:};
\node[slot, fill=green!20, minimum width=3.0cm, anchor=west] at (0.5, 2.0)
  {Attest + Exec $N$};
\node[slot, fill=green!20, minimum width=3.0cm, anchor=west] at (4.5, 2.0)
  {Attest + Exec $N{+}1$};
\node[slot, fill=green!20, minimum width=3.0cm, anchor=west] at (8.5, 2.0)
  {Attest + Exec $N{+}2$};

\node[lbl] at (0.3, 1.1) {Net:};
\node[pub, minimum width=1.2cm, anchor=west] at (3.0, 1.1) {$B_N$};
\node[pub, minimum width=1.2cm, anchor=west] at (7.0, 1.1) {$B_{N+1}$};
\node[pub, minimum width=1.2cm, anchor=west] at (11.0, 1.1) {$B_{N+2}$};

\node[lbl] at (0.3, 0.2) {GPU:};
\node[prove, minimum width=4.2cm, anchor=west] at (3.5, 0.2)
  {Prove Block $N$ (240\,ms)};
\node[prove, minimum width=4.2cm, anchor=west] at (7.5, 0.2)
  {Prove Block $N{+}1$ (240\,ms)};

\node[pub, minimum width=0.9cm, anchor=west] at (7.4, -0.6) {\scriptsize $\mathit{FC}_N$};
\node[pub, minimum width=0.9cm, anchor=west] at (11.4, -0.6) {\scriptsize $\mathit{FC}_{N+1}$};

\node[font=\scriptsize, red!70!black] at (3.5, -1.15) {soft$_N$};
\node[font=\scriptsize, blue!70!black] at (7.8, -1.15) {hard$_N$};
\draw[<->, thick, red!70!black] (3.5, -1.45) -- (7.8, -1.45);
\node[font=\scriptsize, below] at (5.65, -1.5) {${\sim}600$\,ms end-to-end hard finality};

\end{tikzpicture}
\caption{Proof pipelining across consecutive slots.  CPU-bound
  attestation checking and execution occupy the critical path (green),
  while GPU proof generation (blue) and finality certificate publication
  (orange) overlap with subsequent slots.}
\label{fig:pipeline}
\end{figure}

\subsection{Attestation Structure and Verification}
\label{sec:attestation}

An attestation replaces the traditional per-transaction signature with
an HMAC-based credential derived from the user's $\REV$ through the
\acegf key derivation hierarchy.

\begin{definition}[Attestation]
\label{def:attestation}
An attestation for a transaction with payload $P$ on chain $c$ at slot
$s$ is a tuple $A = (\mathit{obj\_hash},\;\idcom,\;\mathit{domain},\;
\mathit{credential})$ where:
\begin{align}
  \mathit{obj\_hash} &= \mathsf{SHA\text{-}256}(P), \label{eq:objhash}\\
  \mathit{domain} &= c \;\|\; s, \label{eq:domain}\\
  k_{\mathit{attest}} &= \HKDF\text{-}\mathsf{SHA256}(\REV,\;\texttt{"ACEGF-V1-MEMPOOL-ATTEST"},\;\mathit{domain}), \label{eq:attestkey}\\
  \mathit{credential} &= \HMAC\text{-}\mathsf{SHA256}(k_{\mathit{attest}},\;\mathit{obj\_hash} \;\|\; \mathit{domain}). \label{eq:credential}
\end{align}
The attestation size is $88$--$104$ bytes (32 bytes $\mathit{obj\_hash}$,
16--32 bytes $\idcom$, 8 bytes $\mathit{domain}$, 32 bytes
$\mathit{credential}$).
\end{definition}

\begin{algorithm}[t]
\caption{Attestation Generation and Verification}
\label{alg:attestation}
\begin{algorithmic}[1]
\Statex \textbf{Client-side: Generate Attestation}
\Function{GenerateAttestation}{$\REV$, $P$, $c$, $s$}
  \State $k \gets \HKDF(\REV,\;\texttt{"ACEGF-V1-MEMPOOL-ATTEST"},\;c \| s)$
  \State $h \gets \mathsf{SHA\text{-}256}(P)$
  \State $\mathit{cred} \gets \HMAC\text{-}\mathsf{SHA256}(k,\;h \| c \| s)$
  \State \Return $(h,\;\idcom,\;c \| s,\;\mathit{cred})$
  \Comment{88--104 bytes total}
\EndFunction
\Statex
\Statex \textbf{Validator-side: Verify Attestation}
\Function{VerifyAttestation}{$A$, $\mathit{tx}$}
  \If{$\mathsf{SHA\text{-}256}(\mathit{tx}.\mathit{payload}) \neq A.\mathit{obj\_hash}$}
    \State \Return \textsc{Reject} \Comment{Payload binding check}
  \EndIf
  \If{$A.\idcom \notin \mathit{RegisteredIdentities}$}
    \State \Return \textsc{Reject} \Comment{Identity existence check}
  \EndIf
  \State \Return \textsc{Accept\_Pending\_Proof}
  \Comment{Full cryptographic verification deferred to Groth16 proof}
\EndFunction
\end{algorithmic}
\end{algorithm}

\Cref{alg:attestation} presents the attestation generation and
verification procedures.  The validator performs only lightweight
format and binding checks at ${\sim}1$--$5\;\mu\mathrm{s}$ per
transaction on CPU.  The full cryptographic binding---proving that
$\mathit{credential}$ was derived from a $\REV$ that hashes to the
registered $\idcom$---is established by the Groth16 proof in Phase~2.

\paragraph{Transaction format comparison.}
\Cref{tab:tx-format} compares the per-transaction on-chain data between
Solana and \sys.

\begin{table}[t]
\centering
\caption{Per-transaction data format comparison.}
\label{tab:tx-format}
\begin{tabular}{@{}lcc@{}}
\toprule
\textbf{Component} & \textbf{Solana} & \textbf{ACE Runtime} \\
\midrule
Header & 3\,B & 3\,B \\
Account keys & $N \times 32$\,B & $N \times 32$\,B \\
Recent blockhash & 32\,B & 32\,B \\
Instructions & variable & variable \\
Authorization & $N \times 64$\,B (signatures) & 90\,B (attestation) \\
\midrule
Typical total & 250--1{,}232\,B & 220--244\,B \\
Public key exposure & Yes & \textbf{None} \\
\bottomrule
\end{tabular}
\end{table}

\subsection{Proof Pipelining}
\label{sec:proof-pipeline}

The proof generation pipeline operates continuously on dedicated GPU
hardware (the \emph{builder} role) and is fully decoupled from the
block production critical path.  \Cref{alg:proof-pipeline} details the
three-stage proof generation process.

\begin{algorithm}[t]
\caption{GPU Proof Pipeline}
\label{alg:proof-pipeline}
\begin{algorithmic}[1]
\Procedure{GPUProofPipeline}{}
  \State $T \gets 128$ \Comment{GPU thread parallelism (e.g., RTX 4090)}
  \While{$\mathit{running}$}
    \State $B \gets \Call{WaitForPublishedBlock}{}$
    \Statex \Comment{\textbf{Stage 1:} Parallel per-tx proof generation (${\sim}1{,}400$ R1CS constraints each)}
    \State $\Pi \gets \Call{ParallelMap}{B.\mathit{txs},\;\mathsf{GenZKACEProof},\;T}$
    \Statex \hspace{2em} \Comment{$\lceil 2000/128 \rceil \times 15\;\mathrm{ms} \approx 240\;\mathrm{ms}$}
    \Statex \Comment{\textbf{Stage 2:} Tree aggregation ($\log_2 N$ levels)}
    \While{$|\Pi| > 1$}
      \State $\Pi' \gets \emptyset$
      \ForAll{$(i, i{+}1) \in \Pi$ \textbf{in parallel}}
        \State $\Pi' \gets \Pi' \cup \{\mathsf{RecursiveGroth16}(\Pi[i], \Pi[i{+}1])\}$
      \EndFor
      \State $\Pi \gets \Pi'$
    \EndWhile \Comment{Final 2--3 levels: ${\sim}30$--$60\;\mathrm{ms}$}
    \Statex \Comment{\textbf{Stage 3:} Publish finality certificate}
    \State $\Call{PublishFinalityCert}{B.\mathit{slot},\;\Pi[0]}$
    \Comment{$\Pi[0]$: single Groth16 proof, 256\,B}
  \EndWhile
\EndProcedure
\end{algorithmic}
\end{algorithm}

\paragraph{ZK-ACE circuit specification.}
The \zkace circuit~\cite{zk-ace} operates over BN254~\cite{bn254} and
enforces the following constraints:

\begin{enumerate}[label=(\roman*)]
\item \textbf{Identity binding:}
  $\Poseidon(\REV, \mathit{salt}, \mathit{domain}) = \idcom$.

\item \textbf{Key derivation correctness:}
  $\HKDF(\REV, \mathit{info}, \mathit{domain}) = k_{\mathit{attest}}$.

\item \textbf{Attestation correctness:}
  $\HMAC(k_{\mathit{attest}}, \mathit{tx\_hash} \| \mathit{domain}) = \mathit{credential}$.

\item \textbf{Policy constraints (optional):}
  $\mathit{target} \in \mathit{allowed\_set}$.
\end{enumerate}

\noindent
The circuit has approximately 1{,}400 R1CS constraints, producing a
256-byte Groth16 proof verifiable in ${\sim}0.5\;\mathrm{ms}$ via a
single Groth16 verification (three bilinear pairing operations on
BN254).

\subsection{Two-Tier Finality Model}
\label{sec:finality}

\sys introduces a two-tier finality model that provides both rapid
liveness guarantees and strong cryptographic irreversibility.

\begin{definition}[Soft Finality]
\label{def:soft}
Block $B_S$ achieves \emph{soft finality} when $\geq \frac{2}{3}$ of
stake-weighted validator votes confirm the block.  This occurs at
$t_{\mathit{soft}} \approx 400\;\mathrm{ms}$ after slot start.
\end{definition}

\begin{definition}[Hard Finality]
\label{def:hard}
Block $B_S$ achieves \emph{hard finality} when a valid Groth16 finality
certificate $\mathit{FC}_S$ is published and verified by validators.
On the normal (builder) path this occurs at
$t_{\mathit{hard}} \approx 600\;\mathrm{ms}$ after slot start.
On the fault-recovery (backup) path, hard finality may be delayed up to
$t_{\mathit{hard}}^{\max} = (K + K') \times 400\;\mathrm{ms} \leq
2.4\;\mathrm{s}$ (\Cref{def:timeout}).
\end{definition}

\begin{algorithm}[t]
\caption{Finality State Machine}
\label{alg:finality}
\begin{algorithmic}[1]
\State \textbf{States:} $\{\textsc{Pending},\;\textsc{Soft},\;\textsc{BackupWait},\;\textsc{Hard},\;\textsc{RolledBack}\}$
\Statex
\Function{UpdateFinality}{$S$, $\mathit{event}$}
  \Switch{$\mathit{event}$}
    \Case{$\mathsf{BlockReceived}(B)$}
      \State Verify $B.\mathit{header}.\mathit{attest\_merkle\_root}$
      \State Re-execute $B.\mathit{transactions}$
      \If{valid}
        \State $\Call{Vote}{B.\mathit{slot}}$
      \EndIf
      \If{$\mathit{votes}(B.\mathit{slot}) \geq \frac{2}{3} \cdot \mathit{total\_stake}$}
        \State $B.\mathit{state} \gets \textsc{Soft}$ \Comment{${\sim}400\;\mathrm{ms}$}
      \EndIf
    \EndCase
    \Case{$\mathsf{FinalityCertReceived}(\mathit{FC})$}
      \State \textbf{require} $\mathit{FC}.\mathit{slot} = B.\mathit{slot}$ \textbf{and} $\mathit{FC}.\mathit{block\_hash} = \Call{Hash}{B}$
      \If{$\mathsf{Groth16.Verify}(\mathit{FC}.\mathit{proof},\;\mathit{FC}.\mathit{pub},\;\mathit{VK})$}
        \State $B.\mathit{state} \gets \textsc{Hard}$ \Comment{Normal or backup path}
      \ElsIf{$B.\mathit{state} = \textsc{Soft}$}
        \State $B.\mathit{state} \gets \textsc{RolledBack}$;\quad $\Call{Slash}{B.\mathit{builder}}$
      \EndIf \Comment{In \textsc{BackupWait}, invalid FC is ignored; await $K{+}K'$}
    \EndCase
    \Case{$\mathsf{Timeout}(K\;\text{slots})$} \Comment{Builder window expires}
      \If{$B.\mathit{state} = \textsc{Soft} \land \neg\;\exists\;\mathit{FC}$}
        \State $\Call{Slash}{B.\mathit{builder}}$ \Comment{Slash immediately}
        \State $B.\mathit{state} \gets \textsc{BackupWait}$ \Comment{Enter backup window}
      \EndIf
    \EndCase
    \Case{$\mathsf{Timeout}(K{+}K'\;\text{slots})$} \Comment{Backup window expires}
      \If{$B.\mathit{state} = \textsc{BackupWait} \land \neg\;\exists\;\mathit{FC}$}
        \State $B.\mathit{state} \gets \textsc{RolledBack}$;\quad $\Call{Requeue}{B.\mathit{txs}}$
      \EndIf
    \EndCase
  \EndSwitch
\EndFunction
\end{algorithmic}
\end{algorithm}

The timeout proceeds in two phases (formalized in
\Cref{def:timeout}): a \emph{builder window} of $K = 2\text{--}3$
slots (800\,ms--1.2\,s), followed by a \emph{backup window} of
$K' = 2\text{--}3$ additional slots.  A builder that fails to submit a
valid proof within $K$ slots forfeits its entire stake; if no
substitute proof arrives during the backup window either, the block is
rolled back.  \Cref{tab:finality-compare} compares finality latencies
across major blockchains.

\begin{table}[t]
\centering
\caption{Finality latency comparison across blockchain platforms.}
\label{tab:finality-compare}
\begin{tabular}{@{}lcccc@{}}
\toprule
\textbf{Chain} & \textbf{Block} & \textbf{Soft} & \textbf{Hard$^{\dagger\ddagger}$} & \textbf{Rollback Condition} \\
\midrule
Solana & 400\,ms & 400\,ms & ${\sim}12\;\mathrm{s}$ & Fork \\
Ethereum & 12\,s & 12\,s & ${\sim}15\;\mathrm{min}$ & Fork \\
\textbf{ACE Runtime} & \textbf{400\,ms} & \textbf{400\,ms} & $\mathbf{{\sim}600\;\mathrm{ms}}$ & Invalid/missing FC or fork \\
\bottomrule
\end{tabular}
\begin{flushleft}
\footnotesize $^{\dagger}$Hard finality definitions differ across
systems.  Solana: 31 confirmed slots (optimistic rollback
window)~\cite{solana-whitepaper}.  Ethereum: two Casper FFG epochs
(BFT finality)~\cite{casper-ffg}.  ACE Runtime: valid finality
certificate published (cryptographic proof-based finality).  Direct
numeric comparison should be interpreted with these definitional
differences in mind.\\
$^{\ddagger}$Normal (builder) path.  On the fault-recovery (backup)
path, hard finality may take up to ${\sim}2.4\;\mathrm{s}$; see
\Cref{def:timeout}.
\end{flushleft}
\end{table}

\begin{theorem}[Hard Finality Irreversibility]
\label{thm:hard-finality}
Under the knowledge-of-exponent assumption on BN254 and the
collision-resistance of Poseidon and SHA-256, a block that has achieved
hard finality can only be reversed by an adversary capable of forging a
Groth16 proof, which requires breaking the $q$-PKE assumption.
\end{theorem}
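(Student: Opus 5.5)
We argue by reduction. Suppose an adversary $\mathcal{A}$ reverses a block $B_S$ that has reached state \textsc{Hard} in \Cref{alg:finality}. We build an algorithm that either forges a Groth16 proof for a false statement or finds a collision in Poseidon or SHA-256.

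The first step is to pin down what ``reversal'' means, using the finality state machine. Once $B_S$ is \textsc{Hard}, no timeout or invalid-certificate event can move it to \textsc{RolledBack}. Those transitions apply only from \textsc{Soft} or \textsc{BackupWait}. So a reversal must take one of two forms.
\begin{enumerate}[label=(\alph*)]
\item Honest validators are made to accept a conflicting block $B'_S \neq B_S$ at slot $S$ together with a certificate $\mathit{FC}'_S$ that also verifies.
\item The certificate $\mathit{FC}_S$ that was accepted for $B_S$ is shown not to correspond to valid authorizations, so $B_S$ must be invalidated after the fact.
\end{enumerate}

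\textbf{Case (a).} Validators check $\mathit{FC}.\mathit{block\_hash}=\mathsf{Hash}(B)$ and verify the proof against public inputs that include $\mathsf{Commit}(B.\mathit{id\_coms})$ and the per-transaction $\mathit{tx}.\mathit{hash}$ values. Suppose $\mathcal{A}$ reuses $\mathit{FC}_S$ for $B'_S$. Then either $\mathsf{Hash}(B'_S)=\mathsf{Hash}(B_S)$, which is a SHA-256 collision, or the public inputs agree while the transactions differ. The second option forces a collision in the Poseidon commitment or in $\mathit{obj\_hash}=\mathsf{SHA\text{-}256}(P)$. Otherwise $\mathcal{A}$ produces a fresh $\mathit{FC}'_S$ that passes $\mathsf{Groth16.Verify}$.

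\textbf{Case (b), and the fresh-proof branch of (a).} We use knowledge soundness of Groth16, which holds under $q$-PKE (the knowledge-of-exponent assumption on BN254) in the generic or algebraic model. From a verifying proof we extract witnesses $(\REV,\mathit{salt},k_{\mathit{attest}})$ that satisfy the \zkace constraints (i)--(iii) for every aggregated transaction. If the extracted witness is consistent with the honestly registered $\idcom$, the statement is true and no valid reversal is possible. Otherwise constraint (i) yields two distinct preimages of the same $\idcom$, which is a Poseidon collision. If no valid witness can be extracted at all, the extractor fails, and that breaks $q$-PKE.

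The main obstacle is the recursive aggregation of \Cref{alg:proof-pipeline}. The final certificate $\Pi[0]$ is a single proof about proofs, so we must apply extraction recursively down the tree of depth $\log_2 N$. This needs knowledge soundness of the recursive Groth16 verifier circuit, and we must bound the loss in extraction tightness. Depth is $O(\log N)$ and $N\le 2000$, so the loss is a small polynomial factor, and I would state the result as a concrete-security bound. I would first attempt to handle this by invoking a standard composition theorem for recursive SNARKs with a polynomial-depth tree. The theorem's phrasing, that reversal ``requires breaking $q$-PKE'', should then be read as the disjunction: forge a proof, which contradicts $q$-PKE via the extractor, or find a hash collision, which is excluded by the stated collision-resistance assumptions.
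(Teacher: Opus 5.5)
\textbf{Verdict: your proposal has a genuine gap.} Your case split matches the paper's own proof. That proof reduces a reversal either to a different $B'_S$ with the same slot and public-input commitment, excluded by Poseidon binding, or to a Groth16 proof of a false statement, excluded by knowledge soundness under $q$-PKE. Your state-machine, block-hash and recursion remarks add detail the paper omits, but the argument does not close.

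\textbf{The step that fails.} In the fresh-proof branch of (a) you write: ``if the extracted witness is consistent with the honestly registered $\idcom$, the statement is true and no valid reversal is possible.'' By your own definition of (a), a conflicting $B'_S\neq B_S$ with a verifying certificate \emph{is} a reversal, whether or not the proven statement is true. The \zkace relation only asserts that each credential was derived from a $\REV$ opening a registered $\idcom$. It says nothing about which block occupies slot $S$. An equivocating leader can publish a $B'_S$ of correctly attested transactions (e.g.\ a conflicting spend by identities it controls), and an honest proof for it exists. The \textsc{FinalityCertReceived} branch of Algorithm~\ref{alg:finality} then moves that block to \textsc{Hard} on a verifying certificate alone, without requiring \textsc{Soft}. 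Nothing is forged and no collision is found, so knowledge soundness says nothing here.

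Closing this case needs a hypothesis absent from both the theorem and your reduction. One option is BFT safety (less than $\frac{1}{3}$ Byzantine stake) together with gating \textsc{Hard} on \textsc{Soft}. The other is to narrow the claim to ``a certificate cannot be moved to another block, and none exists for unauthorized transactions.'' The paper's sketch sidesteps this case only by restricting its first case to the \emph{same} commitment.

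The same conflation of ``the relation is satisfied'' with ``the holder authorized it'' also hides, in case (b), an adversary who has learned the honest $\REV$. That subcase yields no collision. It must be charged to secrecy of $\REV$, i.e.\ the assumptions of Theorem~\ref{thm:attestation} plus preimage resistance of Poseidon, which the theorem does not assume.

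\textbf{The reuse branch.} It relies on a property the paper's certificate lacks. You take the verified public inputs to include every $\mathit{tx}.\mathit{hash}$, but those are public inputs of the per-transaction proofs in Algorithm~\ref{alg:pipeline}. As specified, the aggregated certificate's only public input is $\mathsf{Commit}(B_S.\mathit{id\_coms})$ (Table~\ref{tab:fc-format}), and $\mathit{FC}.\mathit{block\_hash}=\mathsf{Hash}(B)$ is compared outside the SNARK. So one can keep $\pi$ and the commitment, and overwrite the block-hash field with $\mathsf{Hash}(B'_S)$. That gives a verifying certificate for any $B'_S$ with the same identity list but different payloads, with no SHA-256 or Poseidon collision. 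The paper's appeal to Poseidon binding has exactly this defect. Your reduction needs the aggregated statement to take $\mathsf{Hash}(B)$ (or the transaction Merkle root and $S$) as public input, and you should present that as a change to the protocol, not a property of it.

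\textbf{The recursion estimate.} It is optimistic. If each level's extractor is polynomial rather than linear in the running time of the level above, $\log_2 N$ levels compound to $t^{c^{\log_2 N}}$, which is not polynomial. Generic composition theorems give constant depth, or logarithmic depth only with linear-overhead extraction. So for $N\le 2000$ you get a concrete depth-$11$ statement, not an asymptotic one.
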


\begin{proof}
Assume block $B_S$ has achieved hard finality with certificate
$\mathit{FC}_S = (\mathsf{Hash}(B_S),\allowbreak S,\allowbreak
\pi,\allowbreak \mathit{com})$.  A reversal requires either:
\begin{enumerate}
\item producing a valid proof $\pi'$ for a different block $B'_S$ with
  the same slot number $S$ and public input commitment, which contradicts
  the binding property of the Poseidon commitment used in $\mathit{com}$;
  or
\item forging a Groth16 proof for a false statement, which contradicts
  the knowledge soundness of Groth16 under the $q$-PKE
  assumption~\cite{groth16}.
\end{enumerate}
In either case, the adversary must break a computationally hard
assumption, establishing that hard finality is computationally
irreversible.
\end{proof}

\subsection{Block Structure}
\label{sec:block-structure}

\begin{definition}[ACE Block]
An \sys block $B$ consists of:
\begin{itemize}[leftmargin=*]
  \item \textbf{Header} (${\sim}256$ bytes): slot number, parent hash,
    state root, transaction Merkle root, attestation Merkle root, PoH
    hash, leader $\idcom$, timestamp, and transaction count.
  \item \textbf{Body}: a vector of transactions (${\sim}154$ bytes each)
    and their corresponding attestations (${\sim}90$ bytes each).
    Notably, no signatures and no proofs appear in the block body; proofs
    arrive asynchronously in the finality certificate.
\end{itemize}
\end{definition}

\begin{definition}[Finality Certificate]
A finality certificate $\mathit{FC}$ is a fixed-size structure consisting
of: block hash (32~bytes), slot number (8~bytes), aggregated Groth16
proof (256~bytes), and a commitment to the list of identity commitments
(32~bytes).  The total size is 300--400~bytes, \emph{independent of block
size}.  Verification requires a single Groth16 verification---three bilinear
pairing operations on BN254---completing in ${\sim}0.5\;\mathrm{ms}$.
\end{definition}

\paragraph{Block size comparison.}
For a block containing 2{,}000 transactions:
\begin{itemize}[leftmargin=*]
  \item \textbf{Solana:} $2{,}000 \times {\sim}1{,}232\;\mathrm{B}
    \approx 2.4\;\mathrm{MB}$ (including signatures and public keys).
  \item \textbf{ACE Runtime:} Header (${\sim}256\;\mathrm{B}$) +
    transactions ($2{,}000 \times {\sim}154\;\mathrm{B} = 308\;\mathrm{KB}$) +
    attestations ($2{,}000 \times {\sim}90\;\mathrm{B} = 180\;\mathrm{KB}$)
    $\approx 488\;\mathrm{KB}$, a ${\sim}5\times$ reduction.
\end{itemize}

\section{Consensus Design}
\label{sec:consensus}

\sys adopts a consensus architecture that integrates Proof of History
(PoH) based leader scheduling with BFT voting, enhanced by \acegf-native
validator identity management.

\subsection{Verifiable Clock via Proof of History}
\label{sec:poh}

Following the Solana model~\cite{solana-whitepaper}, \sys employs a
sequential SHA-256 hash chain as a verifiable passage-of-time clock:
\begin{equation}
  H_0 \;\to\; H_1 = \mathsf{SHA\text{-}256}(H_0) \;\to\;
  H_2 = \mathsf{SHA\text{-}256}(H_1) \;\to\; \cdots
\end{equation}
PoH serves two purposes: (1) providing deterministic leader scheduling
without requiring consensus on time, and (2) reducing the number of
consensus message rounds by establishing a verifiable ordering of events.

The PoH mechanism is \emph{orthogonal} to the \acegf identity layer and
does not introduce additional cryptographic assumptions.

\subsection{ACE-GF-Native Validator Identity}
\label{sec:validator-identity}

Unlike traditional blockchains where validator keys are independently
generated and managed, \sys derives all validator key material from the
validator's $\REV$ through the \acegf HKDF hierarchy:

\begin{align}
  k_{\mathit{validator}} &= \HKDF(\REV,\;\texttt{"ACEGF-V1-VALIDATOR-CONSENSUS"},\;\mathit{validator\_id}), \label{eq:valkey}\\
  k_{\mathit{attest}} &= \HKDF(\REV,\;\texttt{"ACEGF-V1-MEMPOOL-ATTEST"},\;\mathit{domain}), \label{eq:attkey}\\
  k_{\mathit{vote}} &= \HKDF(\REV,\;\texttt{"ACEGF-V1-VALIDATOR-VOTE"},\;\mathit{epoch}). \label{eq:votekey}
\end{align}

This design provides several properties:
\begin{enumerate}[leftmargin=*]
\item \textbf{Unified key management.}  All keys derive from a single
  $\REV$, eliminating the operational burden of managing multiple
  independent key pairs.
\item \textbf{Context isolation.}  Compromise of $k_{\mathit{vote}}$
  does not reveal $k_{\mathit{attest}}$ or $k_{\mathit{validator}}$,
  as each is derived with a distinct HKDF info string.
\item \textbf{Epoch-based rotation.}  Vote keys can be rotated every
  epoch without changing the validator's on-chain identity ($\idcom$
  remains constant).
\item \textbf{Post-quantum readiness.}  The ML-DSA-44 stream
  (Stream~7) is available from genesis, enabling a seamless transition
  to post-quantum consensus signing when required.
\end{enumerate}

\subsection{Leader Election}
\label{sec:leader-election}

The leader for slot $S$ is selected deterministically using the PoH
output and an epoch-level randomness seed:
\begin{equation}
  \mathsf{Leader}(S) = \mathcal{V}\!\left[
    \mathsf{hash}\bigl(\mathsf{PoH}(S) \;\|\; \mathit{epoch\_seed}\bigr)
    \bmod |\mathcal{V}_w|
  \right],
\end{equation}
where $\mathcal{V}_w$ is the set of validators indexed by stake-weighted
position.  This ensures determinism (all validators compute the same
schedule), unpredictability (dependent on PoH output), and stake-weighted
fairness.

\subsection{BFT Voting and Block Confirmation}
\label{sec:bft-voting}

Block confirmation follows standard BFT voting:
\begin{enumerate}[leftmargin=*]
\item The leader publishes block $B_N$ containing transactions and
  attestations.
\item Each validator verifies the attestation Merkle root, re-executes
  transactions, and casts a vote signed with $k_{\mathit{vote}}$.
\item Upon receiving $\geq \frac{2}{3}$ stake-weighted votes, block
  $B_N$ achieves soft finality.
\item When the finality certificate $\mathit{FC}_N$ arrives, validators
  verify a single Groth16 proof (${\sim}0.5\;\mathrm{ms}$, CPU-only),
  and block $B_N$ achieves hard finality.
\end{enumerate}

The vote messages are $O(1)$ per validator (standard BFT complexity).
The \acegf identity layer introduces no additional per-vote overhead.

\section{Execution Layer}
\label{sec:execution}

\subsection{Sealevel-Enhanced Parallel Execution}
\label{sec:sealevel}

\sys adopts the Sealevel parallel execution
model~\cite{solana-whitepaper}, in which developers declare the set of
accounts accessed by each transaction.  The runtime constructs a
dependency graph and executes non-conflicting transactions in parallel
while serializing conflicting ones.

\sys enhances the base Sealevel model through \acegf context isolation
(\cref{def:context-isolation}).  When a single identity operates
multiple vaults under different contexts (e.g., $\texttt{treasury:0}$
and $\texttt{payroll:0}$), each context derives cryptographically
independent key material.  Provided that the on-chain state addressed
by each context is derived solely from that context's key material
(i.e., no shared global accounts are accessed), the runtime can
\emph{guarantee} that cross-context operations are independent at the
protocol level, eliminating the need for runtime conflict detection in
these cases.

\begin{proposition}[Context-Level Parallelism]
\label{prop:context-parallel}
For any two transactions $\mathit{tx}_1$ and $\mathit{tx}_2$ issued by
the same identity under distinct contexts $c_1 \neq c_2$, the derived
state spaces are cryptographically independent.  If neither transaction
accesses shared global state outside its context-derived address space,
the transactions can be executed in parallel without conflict checking.
\end{proposition}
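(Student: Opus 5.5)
The plan has two parts. First, derive cryptographic independence of the state spaces from Context Isolation (\cref{def:context-isolation}). Second, get conflict-freedom from the standard Sealevel correctness criterion: transactions with disjoint read/write sets commute.

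\textbf{Step 1: formalize the address spaces.} Write $k_{c} = \HKDF(\REV,\mathit{info},\mathit{context}{=}c)$ for the key material of context $c$. Define the context-derived address space as $\mathcal{A}(c) = \{\mathsf{Addr}(k_c, j) : j \in \mathbb{N}\}$, where $\mathsf{Addr}$ is the deterministic (hash-based, e.g.\ SHA-256 or Poseidon) address derivation used by the runtime.

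\textbf{Step 2: show the address spaces are disjoint.}
\begin{itemize}
\item By \cref{def:context-isolation}, $c_1 \neq c_2$ gives $k_{c_1} \neq k_{c_2}$, and the two keys are computationally indistinguishable from independent uniform keys under the PRF assumption on HKDF.
\item A collision $\mathsf{Addr}(k_{c_1}, j_1) = \mathsf{Addr}(k_{c_2}, j_2)$ therefore either yields a collision of the address hash (since the inputs differ) or happens with negligible probability over independent random keys.
\item Hence $\mathcal{A}(c_1) \cap \mathcal{A}(c_2) = \emptyset$ except with negligible probability. I would make this rigorous by a hybrid argument: replace the HKDF outputs by truly random keys, losing the PRF advantage, then bound the collision probability by the collision-resistance advantage plus a birthday term of order $q^2/2^{256}$ for $q$ derived addresses.
\end{itemize}

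\textbf{Step 3: show the transactions do not conflict.} Let $R_i$ and $W_i$ be the declared read and write sets of $\mathit{tx}_i$. The hypothesis that no shared global state is accessed gives $R_i \cup W_i \subseteq \mathcal{A}(c_i)$. Combined with Step 2, this yields
\[
(W_1 \cap (R_2\cup W_2)) \cup (W_2 \cap (R_1 \cup W_1)) = \emptyset .
\]
This is exactly the Sealevel non-conflict condition. Since each transaction's effect $\Delta_{\mathit{tx}_i}$ depends only on, and modifies only, its own accounts, the two serial orders produce the same state, and so does any parallel interleaving. The runtime can therefore schedule them concurrently without building a dependency edge between them. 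The address spaces are fixed statically by the contexts, so this follows without inspecting the transactions at runtime, which is what lets conflict checking be skipped.

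\textbf{Main obstacle.} The hard part is the modelling assumption rather than the cryptography. The statement is only meaningful if the runtime enforces that a transaction's declared accounts lie in its context's address space. Otherwise a malicious transaction could declare an address belonging to another context, and conflict-freedom would fail deterministically, not just with negligible probability. I would first make this explicit as an enforced invariant: account declarations are validated as $\mathsf{Addr}(k_c,\cdot)$ outputs, checked cheaply or via the ZK circuit. The conclusion should then be stated as holding except with negligible probability, rather than unconditionally.
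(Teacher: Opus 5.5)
Your proposal is correct and takes the same route as the paper's proof. Context isolation gives distinct, unlinkable keys for $c_1 \neq c_2$; the accounts derived from these keys occupy disjoint address spaces; and under the no-shared-global-state hypothesis the read/write sets are disjoint, so no read--write or write--write conflict can arise.

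The only difference is rigor. The paper asserts disjointness directly from unlinkability, whereas you justify it through collision resistance of the address derivation plus a PRF hybrid and birthday bound. You also make explicit that the conclusion holds except with negligible probability, and that it relies on the runtime enforcing that declared accounts lie in the context's address space.
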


\begin{proof}
By \cref{def:context-isolation}, $\HKDF(\REV, \mathit{info}, c_1)$ and
$\HKDF(\REV, \mathit{info}, c_2)$ produce computationally unlinkable
keys.  The on-chain accounts derived from these keys occupy disjoint
address spaces.  Therefore, no read--write or write--write conflict can
arise between $\mathit{tx}_1$ and $\mathit{tx}_2$, and parallel
execution is safe.
\end{proof}

\paragraph{Example.}
Consider a user operating a treasury vault (context
$\texttt{treasury:0}$) and a payroll vault (context
$\texttt{payroll:0}$).  On Solana, the developer must explicitly declare
non-overlapping account sets and the runtime verifies this at execution
time.  On \sys, the HKDF-based context isolation provides a
\emph{protocol-level guarantee} of independence, enabling the runtime to
parallelize cross-context transactions without conflict detection,
provided neither transaction touches shared global state (e.g., system
accounts) outside its context-derived address space.

\subsection{EVM Compatibility Layer}
\label{sec:evm}

\sys targets an EVM-compatible execution layer, enabling deployment of
existing Solidity and Vyper smart contracts with minimal or no
modification.  Standard EVM opcodes and gas semantics are supported;
the current prototype does not yet cover all precompiled contracts or
edge-case gas accounting.  In addition, \sys exposes four
\acegf-specific precompiled contracts:

\begin{itemize}[leftmargin=*]
\item \texttt{0x0100}: $\mathsf{id\_com\_verify}(\idcom,\;\pi)
  \to \mathsf{bool}$ --- verify an identity commitment proof.
\item \texttt{0x0101}: $\mathsf{context\_derive}(\mathit{tuple})
  \to \mathsf{address}$ --- derive a context-specific address.
\item \texttt{0x0102}: $\mathsf{admin\_factor\_check}(\mathit{path\_id})
  \to \mathsf{status}$ --- check administrative factor authorization.
\item \texttt{0x0103}: $\mathsf{zkace\_batch\_verify}(\pi,\;
  \mathit{commitments}) \to \mathsf{bool}$ --- batch-verify \zkace proofs.
\end{itemize}

These precompiles allow smart contract developers to leverage \acegf
identity primitives (context isolation, ZK identity verification, and
batch proof checking) directly within EVM contracts.

\section{Security Analysis}
\label{sec:security}

We analyze three primary threat vectors and the quantum security posture
of \sys.

\subsection{Attestation Forgery Resistance}
\label{sec:forgery}

\begin{theorem}[Attestation Unforgeability]
\label{thm:attestation}
Let $\mathsf{Adv}^{\mathrm{forge}}_{\mathcal{A}}$ denote the advantage
of any probabilistic polynomial-time (PPT) adversary $\mathcal{A}$ in
producing a valid attestation for a transaction not authorized by the
holder of $\REV$.  Under the pseudorandom function (PRF) assumption on
HMAC-SHA256 and the one-wayness of HKDF-SHA256,
\[
  \mathsf{Adv}^{\mathrm{forge}}_{\mathcal{A}}
  \;\leq\;
  \epsilon_{\mathrm{PRF}} + \epsilon_{\mathrm{OW}}
  \;=\; \mathsf{negl}(\lambda),
\]
where $\epsilon_{\mathrm{PRF}}$ is the PRF advantage against
HMAC-SHA256, $\epsilon_{\mathrm{OW}}$ is the one-wayness advantage
against HKDF-SHA256, and $\lambda$ is the security parameter.
Concretely, for $\lambda = 256$ the best known classical attack is
bounded by $O(2^{-256})$ and the best known quantum attack (via
Grover's algorithm) by $O(2^{-128})$.
\end{theorem}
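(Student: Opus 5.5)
The proof is a game-hopping reduction that separates the two sources of advantage in the bound: recovering or predicting the attestation key $k_{\mathit{attest}}$, and predicting an HMAC output under a key the adversary does not know.

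\textbf{Game setup.} Fix a PPT adversary $\mathcal{A}$ and define the real forgery game $G_0$.
\begin{itemize}[leftmargin=*]
\item A uniformly random $\REV$ is sampled, and $\mathcal{A}$ receives $\idcom$.
\item $\mathcal{A}$ gets oracle access to $\textsc{GenerateAttestation}(\REV,\cdot,\cdot,\cdot)$ (\cref{alg:attestation}) on payload, chain and slot triples of its choice.
\item $\mathcal{A}$ wins if it outputs $(P^*,c^*,s^*,\mathit{cred}^*)$ such that $\mathit{cred}^* = \HMAC(k^*,\mathsf{SHA\text{-}256}(P^*)\|c^*\|s^*)$, where $k^*=\HKDF(\REV,\texttt{"ACEGF-V1-MEMPOOL-ATTEST"},c^*\|s^*)$, and the triple was never queried.
\end{itemize}
This "not queried" condition formalises "not authorized by the holder of $\REV$". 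If $\mathsf{SHA\text{-}256}$ collisions matter (a fresh $P^*$ with the same hash as a queried payload), I would add a collision-resistance term or fold it into negligible slack.

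\textbf{Hop to $G_1$ (HKDF term).} In $G_1$, every derived key $k_{c\|s}$ is replaced by an independent uniform key, sampled lazily per domain. The distinguishing gap is bounded by $\epsilon_{\mathrm{OW}}$. The reduction simulates the oracles from its HKDF challenge outputs, and per-domain independence follows from \cref{def:context-isolation}.

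This is the main obstacle. The theorem states \emph{one-wayness} of HKDF, but the hop needs derived keys to be pseudorandom; one-wayness alone does not yield this. I would first try to argue that, in the random-oracle or PRF view of HKDF implicit in \cref{def:context-isolation}, the only way to distinguish is to recover $\REV$ or query it. Under that view the gap is bounded by the probability of inverting, which the paper labels $\epsilon_{\mathrm{OW}}$. Failing that, I would explicitly reinterpret $\epsilon_{\mathrm{OW}}$ as HKDF's PRF/KDF advantage, note that this is the standard RFC~5869 security notion, and remark that the bound then holds in the stated additive form.

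\textbf{Hop to $G_2$ (HMAC term).} In $G_2$, each $\HMAC(k_{c\|s},\cdot)$ is replaced by a truly random function. The gap is at most $\epsilon_{\mathrm{PRF}}$ via a hybrid over the domains queried; a polynomial factor from the hybrid is absorbed into $\epsilon_{\mathrm{PRF}}$ or into $\mathsf{negl}(\lambda)$. In $G_2$ the forgery target $(\mathsf{SHA\text{-}256}(P^*)\|c^*\|s^*)$ is a fresh point, so $\mathit{cred}^*$ matches with probability $2^{-256}$, which is absorbed into the negligible term.

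\textbf{Summing up.} Adding the hops gives $\mathsf{Adv}^{\mathrm{forge}}_{\mathcal{A}}\le \epsilon_{\mathrm{PRF}}+\epsilon_{\mathrm{OW}}+2^{-256}$. For the concrete remark, the best generic attacks are a key search over 256-bit keys or a credential guess, each succeeding with probability $O(2^{-256})$ per attempt classically. Grover's algorithm gives a quadratic speedup, yielding the $O(2^{-128})$ quantum figure. I would present these as heuristic estimates for SHA-256-based primitives rather than proven bounds.
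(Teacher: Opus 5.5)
Your route differs from the paper's. The paper gives no reduction. It lists three attack strategies (recover $\REV$ by inverting Argon2id or HKDF, brute-force $k_{\mathit{attest}}$, forge the HMAC directly) and reads the $2^{-256}$ and $2^{-128}$ figures off generic search costs. Your game hops make explicit what that case split leaves implicit: a freshness condition, a SHA-256 collision term, and a hybrid loss over domains. You are also right that one-wayness of HKDF cannot justify replacing derived keys by uniform, independent ones. The paper's argument has the same weakness: its third case invokes HMAC's PRF security, which applies only to uniform keys independent across domains. That is exactly the pseudorandomness the paper merely posits in Definition~\ref{def:context-isolation}.

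The gap is in your $G_0\to G_1$ hop. You put $\idcom=\Poseidon(\REV,\mathit{salt},\mathit{domain})$ in $\mathcal{A}$'s view but never account for it.
\begin{itemize}
\item A standard-model reduction to HKDF's PRF/KDF security, with $\REV$ as the challenger's secret, cannot compute $\idcom$, so it cannot simulate $G_0$.
\item The theorem's hypotheses say nothing about Poseidon, so they cannot suffice. A $\Poseidon$ that leaks $\REV$ breaks the scheme while both listed assumptions still hold.
\end{itemize}
You need either a joint assumption (derived keys stay pseudorandom given $\idcom$) or your random-oracle route carried out precisely. Model HKDF as a random oracle; $G_0$ and $G_1$ are then identical until $\mathcal{A}$ queries it on $\REV$. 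Bound that event in $G_1$, where the keys are independent of $\REV$ and the only $\REV$-dependent value in view is $\idcom$. An inverter is handed $\idcom$, simulates $G_1$ with its own random keys, and outputs the first component of a random one of $\mathcal{A}$'s $q_H$ oracle queries. This shows the gap is at most $q_H\,\epsilon_{\mathrm{rec}}$, where $\epsilon_{\mathrm{rec}}$ is the advantage of recovering $\REV$ from $\idcom$. So the one-wayness term that actually appears is that of $\REV\mapsto\idcom$, not of HKDF, and it must be added to the hypotheses.

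Two smaller repairs. First, by the signature in Definition~\ref{def:key-derivation}, \eqref{eq:attestkey} feeds the domain into HKDF's \emph{salt} slot. The per-domain pseudorandomness your standard-model fallback needs is therefore not the usual RFC~5869 KDF notion (secret key material, fixed salt, varying info). It is a dual-PRF property of the extract map $\mathit{salt}\mapsto\HMAC(\mathit{salt},\REV)$, which is a stronger and nonstandard assumption; the random-oracle route sidesteps this. Second, phrase freshness on $(P^*,\,c^*\|s^*)$ rather than on the triple, or fix an injective encoding of $(c,s)$. Otherwise a re-split of $c\|s$ yields a fresh triple that simply replays a queried credential. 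With these repairs your bound reads $q_D\,\epsilon_{\mathrm{PRF}}+q_H\,\epsilon_{\mathrm{rec}}+\epsilon_{\mathrm{CR}}+2^{-256}$, where $q_D$ is the number of domains in the hybrid. This is a defensible version of the stated inequality.
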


\begin{proof}
The attestation credential is computed as
$\mathit{credential} = \HMAC\text{-}\mathsf{SHA256}(k_{\mathit{attest}},\;
m)$ where $k_{\mathit{attest}} = \HKDF(\REV, \mathit{info},
\mathit{domain})$.  An adversary who does not know $\REV$ must either:
\begin{enumerate}
\item \textbf{Recover $\REV$.}  This requires inverting Argon2id (the
  encapsulation) or HKDF-SHA256 (the derivation).  For 256-bit keys the
  best known classical attack cost is bounded by $2^{256}$ evaluations.
\item \textbf{Recover $k_{\mathit{attest}}$.}  Since $k_{\mathit{attest}}$
  is a 256-bit HKDF output, brute-force search cost is bounded by
  $2^{256}$ evaluations classically.  Grover's algorithm reduces this
  to at most $2^{128}$, which remains computationally infeasible.
\item \textbf{Forge the HMAC directly.}  Under the PRF assumption on
  HMAC-SHA256, this succeeds with advantage at most
  $\epsilon_{\mathrm{PRF}}$, which is negligible by assumption.
\end{enumerate}
Therefore, the attestation scheme reduces to the PRF security of
HMAC-SHA256 and the one-wayness of HKDF-SHA256.  Note that this
is a \emph{symmetric-key} security reduction, distinct from the
discrete-logarithm-based security of signature schemes such as
Ed25519; however, both achieve comparable concrete security levels at
the 128-bit post-quantum tier.
\end{proof}

\subsection{Missing Proof Attack}
\label{sec:missing-proof}

A malicious builder may publish a block containing valid attestations
but fail to submit the corresponding Groth16 proof within the required
timeframe.

\begin{definition}[Two-Phase Proof Timeout Protocol]
\label{def:timeout}
The protocol proceeds in two phases after block publication:
\begin{enumerate}[nosep]
  \item \textbf{Builder window} ($K$ slots, $K = 2\text{--}3$, i.e.,
    800\,ms--1.2\,s): the elected builder must publish a valid finality
    certificate.  If it does, the block is finalized normally.
  \item \textbf{Backup window} ($K'$ additional slots, $K' = 2\text{--}3$):
    if the builder fails, any quorum of $\lceil 2n/3 \rceil$ validators
    may produce and publish a substitute finality certificate (see below).
    The builder's entire stake is slashed regardless.
\end{enumerate}
If no valid finality certificate is published within $K + K'$ slots
(i.e., at most 2.4\,s), the block is rolled back and all affected
transactions are re-queued.
\end{definition}

\begin{definition}[Witness Availability]
\label{def:witness-avail}
A block $B_S$ satisfies \emph{witness availability} if, by the start of
the backup window, at least $\lceil 2n/3 \rceil$ validators have
received the encrypted witness bundles for every transaction in $B_S$
via the witness-availability gossip protocol.  Formally verifiable
instantiations (e.g., erasure-coded distribution with data-availability
sampling, as in Danksharding~\cite{danksharding}) can provide cryptographic
guarantees for this condition; a concrete protocol is left to future
work.
\end{definition}

\begin{proposition}[Economic Irrationality of Missing-Proof Attack]
\label{prop:missing-proof}
Under the standard rational-actor model---i.e., the builder is a single
strategic agent whose payoff is determined solely by staking rewards,
block rewards, and slashing penalties, with no external economic
positions correlated with network disruption---deliberately withholding
a proof is a strictly dominated strategy for any builder with positive
stake.
\end{proposition}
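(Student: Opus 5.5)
The plan is to compare the builder's payoff under the two available pure strategies for a given block $B_S$: \emph{publish} a valid finality certificate within the builder window, or \emph{withhold} it. Under the rational-actor hypothesis, the builder's utility depends only on staking rewards, block rewards, and slashing. Write $R_{\mathrm{blk}} \ge 0$ for the block reward and fees attributable to $B_S$. Write $R_{\mathrm{stake}} \ge 0$ for the discounted stream of future staking rewards the builder keeps while its stake stays bonded. Write $\sigma > 0$ for the stake, and $c_{\mathrm{prove}}$ for the marginal cost of proving. The argument has three steps: fix the payoff under each strategy, compare them, and check that the comparison holds for every possible response of the rest of the network. That last step is what makes withholding \emph{strictly dominated} rather than merely worse in expectation.

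First, the payoff of publishing. By \Cref{alg:finality} the block moves to \textsc{Hard} and no slashing is triggered, so the payoff is $U_{\mathrm{pub}} = R_{\mathrm{blk}} + R_{\mathrm{stake}} - c_{\mathrm{prove}}$. Second, the payoff of withholding. \Cref{def:timeout} says the builder's entire stake is slashed once the builder window of $K$ slots expires, ``regardless'' of what happens in the backup window. This is reflected in the $\mathsf{Timeout}(K)$ branch of \Cref{alg:finality}, which slashes immediately before entering \textsc{BackupWait}. So in every continuation the builder loses $\sigma$ together with $R_{\mathrm{stake}}$. If a quorum supplies a substitute certificate, the block survives, but the builder does not earn $R_{\mathrm{blk}}$ for work it did not do. If no substitute arrives, the block is rolled back and $R_{\mathrm{blk}}$ is forfeited anyway. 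Either way $U_{\mathrm{wh}} \le -\sigma$.

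Third, the comparison. We need $U_{\mathrm{pub}} - U_{\mathrm{wh}} \ge R_{\mathrm{blk}} + R_{\mathrm{stake}} + \sigma - c_{\mathrm{prove}} > 0$. Because $U_{\mathrm{wh}} \le -\sigma$ holds in every continuation, whatever the other validators do and whether or not witness availability (\Cref{def:witness-avail}) holds, the inequality holds state by state. That gives strict dominance, not just a comparison of expectations.

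The main obstacle is the term $c_{\mathrm{prove}}$. The claim needs $c_{\mathrm{prove}} < \sigma + R_{\mathrm{blk}} + R_{\mathrm{stake}}$, which is not literally implied by ``positive stake'' alone. I would first try to absorb it into the model. The builder role is bonded with dedicated GPU hardware that is already provisioned (\Cref{alg:proof-pipeline}), so the marginal proving cost of a roughly 240\,ms job is a sunk or negligible quantity. Protocol parameters (minimum builder stake, block reward) can be taken to exceed it. I would state this explicitly as part of the standard rational-actor model: an honest run is always feasible at no net loss. If that framing is unacceptable, I would fall back to a slightly weaker reading. Here ``positive stake'' means stake above the one-block proving cost, which is the only regime in which the proposition is meaningful.

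The exclusion of external correlated positions in the hypothesis is exactly what rules out a term that could offset $-\sigma$ in $U_{\mathrm{wh}}$. I would remark on this to show the hypothesis is necessary.
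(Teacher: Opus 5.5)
Your proposal is correct and follows essentially the paper's argument: withholding costs the full stake, which is slashed as soon as the builder window expires regardless of what the backup window produces, while under the stated payoff model it brings no offsetting gain. You make this explicit as a state-by-state payoff comparison, which is what actually justifies ``strictly dominated.'' The $c_{\mathrm{prove}}$ issue you flag is less of an obstacle than you fear: the paper never models a proving cost and simply appeals to the design assumption $s \gg r$, and the statement's own hypothesis that the payoff depends \emph{solely} on staking rewards, block rewards and slashing already excludes a proving-cost term, so $\sigma > 0$ by itself gives the strict inequality and your fallback reading is unnecessary.
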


\begin{proof}
Let $s$ denote the builder's stake and $r$ the block reward.
The attack cost is $s$ (full slashing), while the attack benefit is at
most a temporary service disruption of $(K + K') \times 400\;\mathrm{ms}
\leq 2.4\;\mathrm{s}$ (accounting for both the builder and backup
windows in \Cref{def:timeout}).  Since $s \gg r$ by the staking protocol
design, and the disruption is bounded, no rational builder with positive
stake would choose this strategy under the stated payoff model.

Additionally, a \emph{backup proving} mechanism provides a secondary
liveness guarantee.  Each transaction submitted to the mempool is
accompanied by an \emph{encrypted witness bundle}---the private inputs
$w_{\mathit{tx}}$ required by $\mathcal{C}_{\zkace}$, encrypted under a
threshold public key held by the active validator set.  These bundles
are disseminated via a dedicated \emph{witness-availability gossip
protocol} that operates alongside, but independently of, the block
propagation channel; crucially, encrypted witness data is \emph{not}
included in the block body, so the on-chain per-transaction footprint
(244\,B, see \cref{sec:bandwidth-tps}) is unaffected.

If the original builder fails to produce a proof within the builder
window ($K$ slots), the backup window opens
(\Cref{def:timeout}).  During this window, any quorum of
$\lceil 2n/3 \rceil$ validators that have received the encrypted
bundles via gossip can cooperatively threshold-decrypt the witness
material and independently generate the aggregated Groth16 proof,
claiming a share of the block reward.  Because decryption requires a
supermajority quorum, no individual validator or minority coalition can
access the raw witnesses; however, the decrypting quorum \emph{does}
observe the plaintext witnesses, so the privacy guarantee is
\emph{committee-confidential} rather than fully zero-knowledge with
respect to all parties.

The backup mechanism therefore ensures liveness \emph{provided} the
witness-availability condition of \Cref{def:witness-avail} holds: i.e.,
$\lceil 2n/3 \rceil$ validators have received the encrypted bundles by
the start of the backup window.  Under this condition, backup proving
can complete within the $K'$-slot window and the block reaches hard
finality without rollback.  If witness availability is not met, the
system falls back to the rollback path of \Cref{def:timeout}.
\end{proof}

\begin{remark}
The above analysis does not account for adversaries with external
incentives (e.g., MEV extraction, cross-market short positions, or
competitive sabotage) or non-rational actors.  Extending the economic
security argument to such settings requires additional mechanism-design
assumptions (e.g., MEV-burn or proposer--builder separation) and is left
to future work.
\end{remark}

\subsection{Quantum Security Layering}
\label{sec:quantum}

\sys provides defense-in-depth against quantum adversaries through four
security layers:

\begin{enumerate}[leftmargin=*]
\item \textbf{Identity layer.}  The on-chain identity $\idcom =
  \Poseidon(\REV, \mathit{salt}, \mathit{domain})$ is a hash preimage
  problem.  No public key is exposed on-chain, rendering ``harvest now,
  decrypt later'' attacks inapplicable.  Quantum computers cannot extract
  $\REV$ from $\idcom$ (hash preimage resistance is preserved under
  quantum computation, with Grover providing at most a square-root
  speedup on a 256-bit preimage).

\item \textbf{Attestation layer.}  HMAC-SHA256 retains 128-bit security
  against quantum adversaries (Grover's bound), which is considered
  sufficient for the foreseeable future.

\item \textbf{Proof layer.}  Groth16 on BN254 is \emph{not}
  quantum-secure, as pairing-based assumptions are broken by Shor's
  algorithm.  However, proofs serve only as ephemeral authorization
  confirmations---they do not reveal any secret key material, and new
  proofs are generated for every block.  The migration path to
  quantum-secure SNARKs (e.g., STARK-based proof systems) is
  well-defined and does not require changes to the identity or
  attestation layers.

\item \textbf{Post-quantum cryptography (PQC) mode.}  Stream~7 of the
  \acegf key derivation hierarchy provides ML-DSA-44
  keys~\cite{nist-pqc} from genesis.  The \zkace circuit can compress
  ML-DSA-44 verification (${\sim}200\;\mu\mathrm{s}$ natively) into the
  same constant-size Groth16 proof, incurring no additional per-block
  verification cost.  Full PQC operation (replacing Groth16 with
  lattice-based SNARKs) can be achieved when such proof systems mature.
\end{enumerate}

\begin{table}[t]
\centering
\caption{Quantum security posture of each \sys layer.}
\label{tab:quantum}
\begin{tabular}{@{}llll@{}}
\toprule
\textbf{Layer} & \textbf{Primitive} & \textbf{Classical} & \textbf{Quantum} \\
\midrule
Identity & Poseidon hash & 256-bit & 128-bit (Grover) \\
Attestation & HMAC-SHA256 & 256-bit & 128-bit (Grover) \\
Proof & Groth16/BN254 & 128-bit & Broken (Shor) \\
PQC mode & ML-DSA-44 + ZK & 128-bit & 128-bit \\
\bottomrule
\end{tabular}
\end{table}

\section{Quantitative Evaluation}
\label{sec:evaluation}

We present a quantitative analysis combining representative library-level
microbenchmarks with a simple analytical model.  Unless explicitly marked
as measured prototype numbers, values in this section should be interpreted
as model-driven estimates.  \Cref{sec:prototype-measurements} provides
measured numbers from a Rust reference implementation that validate
the analytical estimates.

\subsection{Per-Transaction Verification Cost}
\label{sec:per-tx-cost}

\Cref{tab:per-tx} compares the per-transaction authorization verification
cost across different cryptographic schemes.

\begin{table}[t]
\centering
\caption{Per-transaction authorization verification cost (representative microbenchmarks and estimates).}
\label{tab:per-tx}
\begin{tabular}{@{}lccc@{}}
\toprule
\textbf{Operation} & \textbf{CPU} & \textbf{GPU (batch)} & \textbf{Source} \\
\midrule
Ed25519 verify (Solana) & ${\sim}76\;\mu\mathrm{s}$ & ${\sim}2\;\mu\mathrm{s}$ & \cite{dalek-ed25519} \\
ECDSA secp256k1 verify & ${\sim}50\;\mu\mathrm{s}$ & ${\sim}3\;\mu\mathrm{s}$ & \cite{libsecp256k1} \\
ML-DSA-44 verify (PQC) & ${\sim}200\;\mu\mathrm{s}$ & N/A & \cite{liboqs} (approx.) \\
AttestCheck (ACE) & ${\sim}1$--$5\;\mu\mathrm{s}$ & N/A (CPU sufficient) & \cref{sec:prototype-measurements} \\
Groth16 verify (per block) & ${\sim}500\;\mu\mathrm{s}$ & N/A & \cite{arkworks} \\
\bottomrule
\end{tabular}
\end{table}

The AttestCheck operation is approximately $15\times$--$76\times$ faster than Ed25519
verification on CPU in the above model.  Even compared to GPU-batched Ed25519
(${\sim}2\;\mu\mathrm{s}$), HMAC-based attestation checking provides
comparable or superior performance on commodity CPUs without requiring
specialized hardware in this model.

\subsection{Block Verification Time: \texorpdfstring{$O(N)$}{O(N)} vs.\ \texorpdfstring{$O(1)$}{O(1)}}
\label{sec:block-verification}

The most significant performance advantage of \sys is the transition
from $O(N)$ to $O(1)$ block verification.  \Cref{tab:block-verify}
presents verification times as a function of block size.

\begin{table}[t]
\centering
\caption{Model-based block verification time comparison.  \sys provides constant
  $O(1)$ verification regardless of block size.}
\label{tab:block-verify}
\begin{tabular}{@{}rcccc@{}}
\toprule
\textbf{Block size} & \textbf{Solana (GPU)} & \textbf{Solana (CPU)} & \textbf{ACE (1 Groth16)} & \textbf{Speedup} \\
\midrule
1{,}000 tx & 2\,ms & 76\,ms & \textbf{0.5\,ms} & $4\times$ \\
10{,}000 tx & 20\,ms & 760\,ms & \textbf{0.5\,ms} & $40\times$ \\
100{,}000 tx & 200\,ms & 7{,}600\,ms & \textbf{0.5\,ms} & $400\times$ \\
1{,}000{,}000 tx & 2{,}000\,ms & 76{,}000\,ms & \textbf{0.5\,ms} & $4{,}000\times$ \\
\bottomrule
\end{tabular}
\end{table}

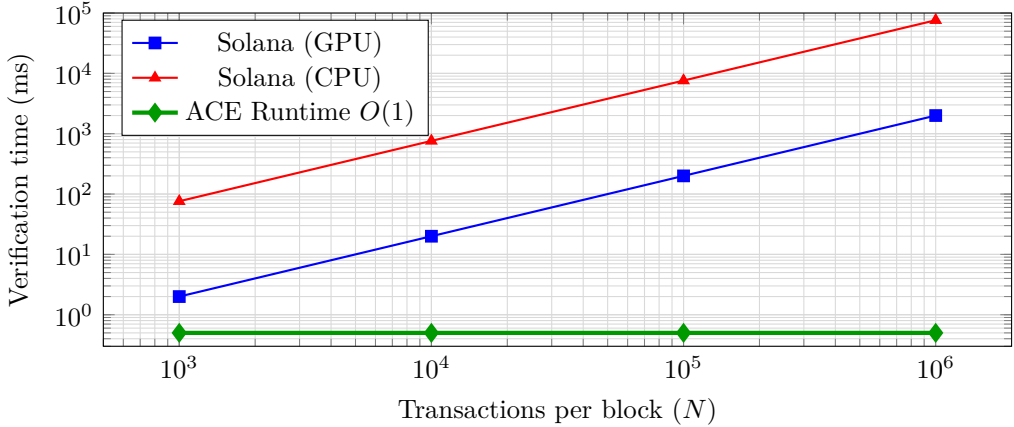
\begin{figure}[t]
\centering
\begin{tikzpicture}
\begin{axis}[
  width=0.85\columnwidth,
  height=6cm,
  xlabel={Transactions per block ($N$)},
  ylabel={Verification time (ms)},
  xmode=log,
  ymode=log,
  log basis x=10,
  log basis y=10,
  xmin=500, xmax=2000000,
  ymin=0.3, ymax=100000,
  legend style={at={(0.02,0.98)}, anchor=north west, font=\small},
  grid=both,
  grid style={gray!30},
  tick label style={font=\small},
  label style={font=\small},
]
\addplot[blue, thick, mark=square*, mark size=2pt] coordinates {
  (1000, 2) (10000, 20) (100000, 200) (1000000, 2000)
};
\addlegendentry{Solana (GPU)}

\addplot[red, thick, mark=triangle*, mark size=2pt] coordinates {
  (1000, 76) (10000, 760) (100000, 7600) (1000000, 76000)
};
\addlegendentry{Solana (CPU)}

\addplot[green!60!black, ultra thick, mark=diamond*, mark size=2.5pt] coordinates {
  (1000, 0.5) (10000, 0.5) (100000, 0.5) (1000000, 0.5)
};
\addlegendentry{ACE Runtime $O(1)$}

\end{axis}
\end{tikzpicture}
\caption{Block verification time as a function of block size.  Solana
  scales linearly ($O(N)$) while \sys remains constant ($O(1)$) at
  $0.5\;\mathrm{ms}$ regardless of the number of transactions.}
\label{fig:verification-scaling}
\end{figure}

The verification complexity can be summarized as:
\begin{align}
  T_{\mathrm{Solana}}(N) &= O(N) \cdot t_{\mathrm{sig}},
    \quad\text{linear growth with block size}, \label{eq:solana-verify}\\
  T_{\mathrm{ACE}}(N) &= O(1) \cdot t_{\mathrm{Groth16}} = 0.5\;\mathrm{ms},
    \quad\text{constant}. \label{eq:ace-verify}
\end{align}

\subsection{On-Chain Authorization Data}
\label{sec:onchain-data}

\Cref{tab:onchain-data} compares the per-block authorization data
footprint, highlighting the dramatic advantage of \sys in
post-quantum scenarios.

\begin{table}[t]
\centering
\caption{Per-block authorization data comparison.  ACE aggregates all
  per-transaction authorization into a single 256-byte proof.}
\label{tab:onchain-data}
\begin{tabular}{@{}rcccc@{}}
\toprule
\textbf{Block} & \textbf{Solana} & \textbf{Solana} & \textbf{ACE} & \textbf{ACE vs.} \\
\textbf{size} & \textbf{(Ed25519)} & \textbf{(ML-DSA-44)} & \textbf{(ZK-ACE)} & \textbf{Solana PQC} \\
\midrule
1K tx & 96\,KB & 3{,}732\,KB & \textbf{0.256\,KB} & $14{,}578\times$ smaller \\
10K tx & 960\,KB & 37{,}320\,KB & \textbf{0.256\,KB} & $145{,}781\times$ smaller \\
\bottomrule
\end{tabular}
\end{table}

The constant-size proof is particularly significant for post-quantum
migration.  While Solana's authorization data would grow by ${\sim}38\times$
when switching from Ed25519 to ML-DSA-44 (from 96\,bytes to 3{,}732\,bytes
per transaction), \sys maintains a fixed 256-byte proof per block
regardless of the underlying signature scheme used inside the ZK circuit.

\subsection{Bandwidth-Limited TPS}
\label{sec:bandwidth-tps}

We analyze the theoretical throughput ceiling imposed by network
bandwidth:

\begin{align}
  \text{Solana per-tx:}\quad &{\sim}154\;\mathrm{B}\;(\text{payload}) + 64\;\mathrm{B}\;(\text{sig}) \nonumber\\
  &\quad + 32\;\mathrm{B}\;(\text{pubkey}) + {\sim}982\;\mathrm{B}\;(\text{overhead})
  \approx 1{,}232\;\mathrm{B}, \nonumber\\
  \text{ACE per-tx:}\quad &{\sim}154\;\mathrm{B}\;(\text{payload}) + 90\;\mathrm{B}\;(\text{attestation}) = 244\;\mathrm{B}. \nonumber
\end{align}

At 1\,Gbps network bandwidth:
\begin{align}
  \mathrm{TPS}_{\mathrm{Solana}} &= \frac{125\;\mathrm{MB/s}}{1{,}232\;\mathrm{B}} \approx 101{,}000\;\text{tx/s}, \label{eq:tps-solana}\\
  \mathrm{TPS}_{\mathrm{ACE}} &= \frac{125\;\mathrm{MB/s}}{244\;\mathrm{B}} \approx 512{,}000\;\text{tx/s}. \label{eq:tps-ace}
\end{align}

\sys yields a ${\sim}5\times$ higher bandwidth-limited throughput
ceiling \emph{for block propagation} due to the elimination of
per-transaction signatures and public keys.

\paragraph{Witness-gossip bandwidth overhead.}
The backup proving mechanism (\cref{sec:missing-proof}) introduces
a separate witness-availability gossip channel carrying encrypted
witness bundles.  Each bundle contains the private circuit inputs for
one transaction (estimated at ${\sim}200\text{--}400\;\mathrm{B}$ per
transaction for the ${\sim}1{,}400$-constraint $\mathcal{C}_{\zkace}$
circuit).  This traffic is \emph{off the critical path} for the normal
(builder) proving route: it does not affect block propagation latency or
the ${\sim}600\;\mathrm{ms}$ normal-path finality timing.  On the
fault-recovery (backup) path, witness-gossip latency contributes to the
hard-finality upper bound of ${\sim}2.4\;\mathrm{s}$
(\Cref{def:timeout}).  Bundles can be disseminated over multiple slots
via background gossip.  Even in the
worst case where all witness bundles are transmitted within a single
slot alongside the block, the combined per-transaction network footprint
is $244 + 400 = 644\;\mathrm{B}$, still ${\sim}1.9\times$ smaller than
Solana's $1{,}232\;\mathrm{B}$.  The $5\times$ advantage stated above
therefore represents the \emph{block-propagation-only} bound; the
\emph{total system bandwidth} advantage is conservatively
${\sim}1.9\times\text{--}5\times$ depending on gossip scheduling.

On standard hardware with conservative engineering, the
achievable throughput is ${\sim}16{,}000$ TPS, scaling to
${\sim}32{,}000$ TPS with moderate optimization (8-way parallel
execution sharding).

\subsection{Hardware Cost Comparison}
\label{sec:hardware-cost}

\Cref{tab:hardware} compares the hardware requirements and costs
for different node roles.

\begin{table}[t]
\centering
\caption{Hardware requirements and estimated costs by role.}
\label{tab:hardware}
\begin{tabular}{@{}lp{3cm}p{3cm}p{3cm}@{}}
\toprule
\textbf{Component} & \textbf{Solana Validator} & \textbf{ACE Validator (non-builder)} & \textbf{ACE Builder} \\
\midrule
CPU & 24+ cores (${\sim}$\$2{,}000) & 16+ cores (${\sim}$\$1{,}500) & 24+ cores (${\sim}$\$2{,}000) \\
RAM & 512\,GB+ (${\sim}$\$3{,}000) & 256--512\,GB (${\sim}$\$1{,}500--3{,}000) & 512\,GB+ (${\sim}$\$3{,}000) \\
GPU & \textbf{Required} (${\sim}$\$2{,}000) & \textbf{Not needed (\$0)} & Required (${\sim}$\$2{,}000) \\
\midrule
Total & ${\sim}$\$7{,}500+ & ${\sim}$\$3{,}500--5{,}000 & ${\sim}$\$7{,}500 \\
Monthly ops & \$1{,}000--3{,}000 & \$600--2{,}000 & \$1{,}000--3{,}000 \\
\bottomrule
\end{tabular}
\end{table}

The elimination of the GPU requirement for non-builder validators
reduces hardware costs by approximately ${\sim}30\%$--$50\%$ and
significantly lowers the economic barrier to validator participation,
promoting a more decentralized network.

\subsection{Reference Implementation Measurements}
\label{sec:prototype-measurements}

To validate the analytical models above, we implemented a Rust reference
prototype of the \sys pipeline and measured its performance using the
Criterion statistical benchmarking framework~\cite{criterion} on an
Apple~M3~Pro CPU (12 cores, 36\,GB RAM).  Unless otherwise noted, all reported latencies are the sample
median (p50) from Criterion's default configuration (100 samples,
5\,s measurement window per benchmark).
Phases~1a and~1b use Rayon work-stealing parallelism; the current
mock Phase~2 implementation is sequential.  Single-operation
benchmarks are single-threaded.  The benchmark harnesses
(\texttt{benches/\allowbreak crypto\_bench.rs},
\texttt{benches/\allowbreak pipeline\_bench.rs})
are included in the reference implementation and can be
reproduced via \texttt{cargo bench}.

\paragraph{Code availability.}
The reference implementation is publicly available at
\url{https://github.com/ya-xyz/ace-chain/tree/master/ace-runtime}.
For reproducibility, the measurements in this section correspond to the
following pinned revision:
\url{https://github.com/ya-xyz/ace-chain/tree/b805ea90a8c3ab1ce5fa62cf842dccfb9452fc9f/ace-runtime}.
In addition to \texttt{cargo bench}, the end-to-end prototype runner can
be executed via \texttt{cargo run --release --bin prototype\_bench -- 50}.

\paragraph{Cryptographic microbenchmarks.}
\Cref{tab:measured-crypto} summarizes the measured per-operation latencies
for the core cryptographic primitives.

\begin{table}[t]
\centering
\caption{Measured cryptographic primitive latencies (Apple~M3~Pro,
  single-threaded, sample median over 100 samples unless noted).}
\label{tab:measured-crypto}
\begin{tabular}{@{}lr@{}}
\toprule
\textbf{Operation} & \textbf{Latency} \\
\midrule
Attestation generate (HMAC-SHA256) & $2.27\;\mu\mathrm{s}$ \\
Attestation verify (full check) & $830\;\mathrm{ns}$ \\
Payload binding check (SHA-256) & $334\;\mathrm{ns}$ \\
HKDF single-key derivation & $1.18\;\mu\mathrm{s}$ \\
HKDF attest-key derivation & $1.22\;\mu\mathrm{s}$ \\
HKDF all 7 canonical streams & $7.87\;\mu\mathrm{s}$ \\
Identity commitment (SHA-256) & $272\;\mathrm{ns}$ \\
SHA-256 hash (244\,B input) & $544\;\mathrm{ns}$ \\
REV encapsulation (Argon2id)$^*$ & $3.57\;\mathrm{ms}$ \\
\bottomrule
\end{tabular}
\\[4pt]
{\footnotesize $^*$\,10 samples (Argon2id is intentionally slow).}
\end{table}

The measured attestation generation latency of ${\sim}2.3\;\mu\mathrm{s}$
per transaction confirms the $1$--$5\;\mu\mathrm{s}$ estimate in
\cref{tab:per-tx}.  Full attestation verification (payload-binding
SHA-256 hash check + HMAC-SHA256 credential recomputation) completes in
${\sim}830\;\mathrm{ns}$.  The pipeline's Phase~1a AttestCheck performs
lighter structural checks (payload binding, domain match, identity
registry lookup) without recomputing the HMAC credential, yielding
${\sim}129\;\mathrm{ns}$/tx amortised at batch size 2{,}000
(\cref{tab:measured-pipeline}).  Both figures place AttestCheck
$60\times$--$90\times$ faster than Ed25519 verification on CPU.

\paragraph{Pipeline phase timings.}
\Cref{tab:measured-pipeline} reports end-to-end pipeline phase timings
at batch sizes of 100, 500, and 2{,}000 transactions.  Phases~1a and~1b
execute in parallel via Rayon; Phase~2 uses a mock prover that performs
tree aggregation with hash-based proofs (real Groth16 proving would
dominate the Phase~2 cost on GPU hardware).

\begin{table}[t]
\centering
\caption{Measured pipeline phase timings (Rayon parallelism, Apple~M3~Pro).
  Phase~2 uses a mock prover; real Groth16 proving is GPU-bound and not
  included in these CPU measurements.}
\label{tab:measured-pipeline}
\begin{tabular}{@{}lrrrr@{}}
\toprule
\textbf{Phase} & \textbf{100\,tx} & \textbf{500\,tx} & \textbf{2{,}000\,tx} & \textbf{per-tx\,@\,2K} \\
\midrule
1a: AttestCheck$^\dagger$ & $60.6\;\mu\mathrm{s}$ & $117.8\;\mu\mathrm{s}$ & $258.3\;\mu\mathrm{s}$ & $129\;\mathrm{ns}$ \\
1b: Execute$^\dagger$ & $55.1\;\mu\mathrm{s}$ & $81.6\;\mu\mathrm{s}$ & $141.9\;\mu\mathrm{s}$ & $71\;\mathrm{ns}$ \\
1b: State root (Merkle) & $28.2\;\mu\mathrm{s}$ & $135.1\;\mu\mathrm{s}$ & $538.4\;\mu\mathrm{s}$ & $269\;\mathrm{ns}$ \\
2: Prove (mock) & $214.3\;\mu\mathrm{s}$ & $1.09\;\mathrm{ms}$ & $4.19\;\mathrm{ms}$ & $2.10\;\mu\mathrm{s}$ \\
Block build & $101.6\;\mu\mathrm{s}$ & $515.9\;\mu\mathrm{s}$ & $2.02\;\mathrm{ms}$ & $1.01\;\mu\mathrm{s}$ \\
\midrule
\textbf{End-to-end} & $\mathbf{521\;\mu\mathrm{s}}$ & $\mathbf{1.95\;\mathrm{ms}}$ & $\mathbf{7.69\;\mathrm{ms}}$ & $\mathbf{3.85\;\mu\mathrm{s}}$ \\
\bottomrule
\end{tabular}
\\[4pt]
{\footnotesize $^\dagger$\,Rayon work-stealing parallelism.}
\end{table}

At 2{,}000 transactions, the per-transaction amortised cost of Phase~1a
(AttestCheck) is ${\sim}129\;\mathrm{ns}$, indicating strong parallel
scaling on this prototype (a single-threaded baseline comparison is
left to future work).  The full CPU-side pipeline (excluding
GPU-bound Groth16 proving) completes in under $8\;\mathrm{ms}$,
confirming that the CPU phases are not the bottleneck in the
$400\;\mathrm{ms}$ slot budget assumed in \cref{sec:consensus}.

\paragraph{Implications for production throughput.}
At the measured end-to-end CPU cost of $3.85\;\mu\mathrm{s}$
per transaction (2{,}000\,tx batch), a single pipeline instance can
process ${\sim}260{,}000$~tx/s on CPU alone.  Since Phase~2 uses a
mock prover in these measurements, we \emph{extrapolate} that the
expected production bottleneck is Groth16 proving on GPU, not the CPU
pipeline---supporting the design decision to offload proving to a
dedicated builder role (\cref{sec:proof-pipeline}).  Validating this
extrapolation requires integration with a production Groth16 backend
(e.g., arkworks or Rapidsnark), which we leave to future work.

\section{Related Work}
\label{sec:related}

\Cref{tab:positioning} presents a comprehensive comparison of \sys
against existing blockchain platforms across key dimensions.

\begin{table*}[t]
\centering
\caption{Positioning matrix: \sys compared with existing blockchain
  platforms.}
\label{tab:positioning}
\resizebox{\textwidth}{!}{%
\begin{tabular}{@{}lcccccc@{}}
\toprule
\textbf{Platform} & \textbf{Sig.\ Verify} & \textbf{Block Verify} & \textbf{Hard Finality} & \textbf{PQC Path} & \textbf{GPU Req.} & \textbf{Layer} \\
\midrule
Solana~\cite{solana-whitepaper} & $O(N)$ GPU & $O(N)$ & ${\sim}12\;\mathrm{s}$ & None & All nodes & L1 \\
Aptos/Sui~\cite{aptos,sui} & $O(N)$ CPU & $O(N)$ & ${\sim}2\;\mathrm{s}$ & None & None & L1 \\
Ethereum~\cite{ethereum} & $O(N)$ CPU & $O(N)$ & ${\sim}15\;\mathrm{min}$ & EIP planned & None & L1 \\
zkSync~\cite{zksync} & $O(N) \to O(1)$ & $O(1)$ & Inherits L1 & Inherits L1 & Prover & L2 \\
StarkNet~\cite{starknet} & $O(N) \to O(1)$ & $O(1)$ & Inherits L1 & STARK-native & Prover & L2 \\
Mina~\cite{mina} & $O(N)$ & $O(1)^*$ & ${\sim}15\;\mathrm{min}$ & None & Recursive & L1 \\
\textbf{ACE Runtime} & $\mathbf{O(1)}$/block & $\mathbf{O(1)}$ & $\mathbf{{\sim}600\;\mathrm{ms}}$ & \textbf{Day 1} & \textbf{Builder only} & \textbf{L1} \\
\bottomrule
\end{tabular}%
}
\begin{flushleft}
\footnotesize $^*$Mina's $O(1)$ refers to chain-level state proof, not
per-block authorization verification. \sys provides $O(1)$ authorization
verification at the block level combined with sub-second hard finality on
Layer~1.
\end{flushleft}
\end{table*}

\paragraph{High-performance L1 blockchains.}
Solana~\cite{solana-whitepaper} pioneered the PoH-based high-throughput
architecture with 400\,ms block times.  However, its hard finality
requires 31 confirmations (${\sim}12\;\mathrm{s}$), and every validator
must operate GPU hardware for SigVerify.  Aptos~\cite{aptos} and
Sui~\cite{sui} achieve faster finality through DAG-based BFT consensus
but retain $O(N)$ per-transaction verification.  \sys matches Solana's
block time while reducing hard finality from 12\,s to 600\,ms and
eliminating the universal GPU requirement.

\paragraph{ZK rollups and L2 solutions.}
zkSync~\cite{zksync} and StarkNet~\cite{starknet} apply zero-knowledge
proofs for state compression and rollup verification, achieving $O(1)$
state verification for Layer-2 transactions.  However, within the rollup
sequencer, individual transaction signatures are still verified in
$O(N)$, and finality depends on the underlying Layer-1 settlement.  \sys
applies ZK proofs at the identity and authorization layer of a sovereign
Layer-1 chain, achieving $O(1)$ verification without inheriting Layer-1
finality delays.

\paragraph{Recursive proof chains.}
Mina Protocol~\cite{mina} uses recursive SNARK composition to maintain
a constant-size chain state proof.  This addresses the chain
synchronization problem but does not reduce per-block authorization
verification cost, which remains $O(N)$.  \sys complements chain-level
compression with block-level $O(1)$ authorization verification.

\paragraph{Identity-layer innovations.}
Decentralized identity (DID) frameworks and self-sovereign identity
(SSI) protocols~\cite{did-spec} have explored alternative identity
models, but none has been integrated into a blockchain execution layer
to eliminate per-transaction signature verification.  The \acegf
framework~\cite{ace-gf} provides the cryptographic foundation that
makes \sys possible.

\section{Limitations and Future Work}
\label{sec:limitations}

We acknowledge several limitations of the current \sys design and
outline directions for future research.

\paragraph{Execution-bound throughput.}
Without state sharding, the execution-bound TPS of \sys is comparable
to that of Solana (${\sim}16{,}000$--$32{,}000$ TPS on standard
hardware).  The authorization layer optimization removes one bottleneck
but does not inherently scale the state I/O layer.  Context-based state
sharding, leveraging the HKDF context isolation property, is the subject
of a companion paper on ACE-GF-native state
partitioning~\cite{ace-context-sharding}.

\paragraph{Groth16 quantum vulnerability.}
The Groth16 proof system on BN254 is not quantum-secure.  While the
ephemeral nature of proofs (regenerated every block) limits the impact
of this vulnerability, a full transition to quantum-resistant proof
systems (e.g., STARK-based or lattice-based SNARKs) is desirable.  The
zero-cost PQC migration path is detailed in a companion
paper~\cite{ace-pqc-paper}.

\paragraph{Ecosystem bootstrap.}
As a new Layer-1 platform, \sys requires building developer tooling,
wallet infrastructure, and application ecosystem from scratch.  The EVM
compatibility layer (\cref{sec:evm}) mitigates this by enabling
deployment of existing Solidity contracts, but \acegf-native
applications require new development patterns.

\paragraph{Prover market design.}
The current design assumes the block builder also serves as the proof
generator.  A decentralized prover marketplace, in which multiple
competing provers bid to generate finality certificates, would improve
liveness guarantees and reduce centralization risk.  We leave the
mechanism design of such a marketplace to future work.

\paragraph{Trusted setup.}
Groth16 requires a circuit-specific trusted setup ceremony.  While
universal setup alternatives (e.g., PLONK~\cite{plonk}) exist, they
incur larger proof sizes or higher verification costs.  We plan to
evaluate the trade-off between setup trust assumptions and verification
efficiency in a future engineering report.

\section{Conclusion}
\label{sec:conclusion}

This paper has presented \sys, a blockchain runtime that exploits
identity--authorization separation to restructure the transaction
processing pipeline.  By replacing per-transaction cryptographic
signatures with lightweight HMAC attestations and aggregating
authorization proofs into a single per-block Groth16 verification,
\sys targets several improvements over the current state of the art,
as supported by analytical modeling and formal analysis:

\begin{enumerate}[leftmargin=*]
\item \textbf{Sub-second hard finality.}  Cryptographic finality in
  ${\sim}600\;\mathrm{ms}$ by design, compared with Solana's
  ${\sim}12\;\mathrm{s}$ (${\sim}20\times$) and
  Ethereum's ${\sim}15$ minutes (${\sim}1{,}500\times$),
  noting that these systems employ different finality
  definitions (see \Cref{tab:finality-compare}).

\item \textbf{$O(1)$ block verification.}  Block authorization
  verification cost is constant regardless of block size, with
  model-based projections up to $4{,}000\times$ speedup at scale compared to Solana's $O(N)$
  SigVerify.

\item \textbf{GPU-free validators.}  Non-builder validators require no
  GPU hardware, reducing hardware costs by 30\%--50\% and lowering the
  barrier to validator participation.

\item \textbf{Post-quantum readiness.}  The identity and attestation
  layers provide quantum resistance from day one, and the ZK compression
  architecture ensures that adopting post-quantum signature schemes
  incurs no per-block verification overhead.

\item \textbf{Bandwidth efficiency.}  Elimination of per-transaction
  signatures and public keys reduces on-chain per-transaction data by
  ${\sim}5\times$ (block propagation only).  When the off-chain
  witness-gossip overhead for backup proving is included, the total
  system bandwidth advantage is ${\sim}1.9\times\text{--}5\times$
  depending on gossip scheduling (\cref{sec:bandwidth-tps}).
\end{enumerate}

Identity--authorization separation represents an overlooked
architectural dimension in blockchain design.  \sys demonstrates that
this separation, when combined with modern zero-knowledge proof
systems, yields order-of-magnitude improvements in finality,
verification efficiency, hardware accessibility, and quantum
resilience.  These advantages arise from protocol-level design rather
than implementation-level optimization, suggesting that
identity--authorization separation is a fundamental architectural
principle for next-generation blockchain systems.

\section*{Acknowledgments}

The author thanks the ACE-GF research community for discussions on
identity--authorization separation and ZK proof system design.

\begingroup
\raggedright

\endgroup

\appendix
\section{Finality Timeline Comparison}
\label{app:finality-timeline}

\begin{figure}[h]
\centering
\begin{tikzpicture}[
  >=Stealth,
  timeline/.style={thick, ->},
  event/.style={fill=white, draw, rounded corners=2pt, font=\footnotesize,
    inner sep=3pt, align=center},
  marker/.style={circle, fill, inner sep=1.5pt},
]
\node[font=\small\bfseries, anchor=east] at (-0.3, 2.5) {ACE Runtime};
\draw[timeline] (0, 2.5) -- (13, 2.5);
\node[marker, fill=green!60!black] at (0.5, 2.5) {};
\node[event, above] at (0.5, 2.7) {tx submit};
\node[marker, fill=orange] at (2.0, 2.5) {};
\node[event, above] at (2.0, 2.7) {soft\\(400\,ms)};
\node[marker, fill=red!70!black] at (3.0, 2.5) {};
\node[event, above] at (3.0, 2.7) {\textbf{hard}\\(600\,ms)};

\node[font=\small\bfseries, anchor=east] at (-0.3, 0.8) {Solana};
\draw[timeline] (0, 0.8) -- (13, 0.8);
\node[marker, fill=green!60!black] at (0.5, 0.8) {};
\node[event, below] at (0.5, 0.6) {tx submit};
\node[marker, fill=orange] at (2.0, 0.8) {};
\node[event, below] at (2.0, 0.6) {soft\\(400\,ms)};
\node[marker, fill=red!70!black] at (7.5, 0.8) {};
\node[event, below] at (7.5, 0.6) {\textbf{hard}\\(${\sim}$12\,s)};

\node[font=\small\bfseries, anchor=east] at (-0.3, -0.9) {Ethereum};
\draw[timeline] (0, -0.9) -- (13, -0.9);
\node[marker, fill=green!60!black] at (0.5, -0.9) {};
\node[event, below] at (0.5, -1.1) {tx submit};
\node[marker, fill=orange] at (2.5, -0.9) {};
\node[event, below] at (2.5, -1.1) {soft\\(12\,s)};
\node[marker, fill=red!70!black] at (12.5, -0.9) {};
\node[event, below] at (12.5, -1.1) {\textbf{hard}\\(${\sim}$15\,min)};

\draw[<->] (0, -2.3) -- (13, -2.3);
\node[font=\scriptsize, below] at (6.5, -2.4) {Time (not to scale)};

\end{tikzpicture}
\caption{Finality timeline comparison.  \sys targets hard finality at
  ${\sim}600\;\mathrm{ms}$, versus ${\sim}12\;\mathrm{s}$ on Solana and
  ${\sim}15$\,min on Ethereum.}
\label{fig:finality-timeline}
\end{figure}
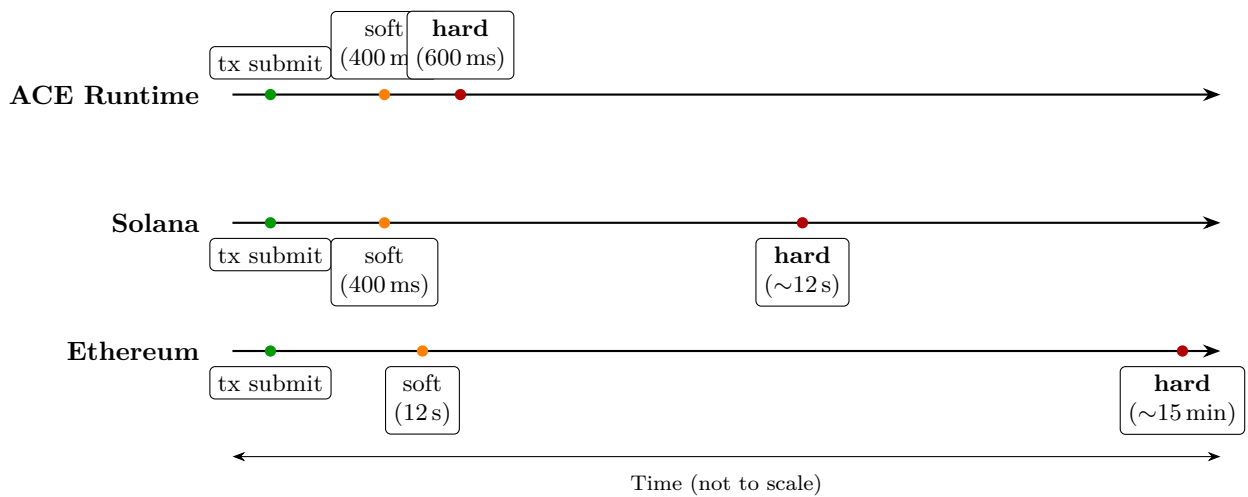

\section{ZK-ACE Circuit Constraint Summary}
\label{app:circuit}

\begin{table}[h]
\centering
\caption{ZK-ACE circuit constraint breakdown.}
\label{tab:circuit-constraints}
\begin{tabular}{@{}lcc@{}}
\toprule
\textbf{Constraint Group} & \textbf{Description} & \textbf{Approx.\ Count} \\
\midrule
Identity binding & $\Poseidon(\REV, \mathit{salt}, \mathit{domain}) = \idcom$ & ${\sim}400$ \\
Key derivation & $\HKDF(\REV, \mathit{info}, \mathit{domain}) = k_{\mathit{attest}}$ & ${\sim}500$ \\
Attestation check & $\HMAC(k_{\mathit{attest}}, m) = \mathit{credential}$ & ${\sim}400$ \\
Policy (optional) & $\mathit{target} \in \mathit{allowed\_set}$ & ${\sim}100$ \\
\midrule
\textbf{Total} & & $\mathbf{{\sim}1{,}400}$ \\
\bottomrule
\end{tabular}
\begin{flushleft}
\footnotesize Proof size: 256 bytes (Groth16 on BN254).
Verification time: ${\sim}0.5\;\mathrm{ms}$ (single Groth16
verification, 3 bilinear pairing operations).
\end{flushleft}
\end{table}

\section{ACE Block and Finality Certificate Format}
\label{app:block-format}

\begin{table}[h]
\centering
\caption{ACE block header fields.}
\label{tab:block-header}
\begin{tabular}{@{}llc@{}}
\toprule
\textbf{Field} & \textbf{Type} & \textbf{Size (bytes)} \\
\midrule
\texttt{slot\_number} & \texttt{u64} & 8 \\
\texttt{parent\_hash} & \texttt{[u8; 32]} & 32 \\
\texttt{state\_root} & \texttt{[u8; 32]} & 32 \\
\texttt{tx\_merkle\_root} & \texttt{[u8; 32]} & 32 \\
\texttt{attest\_merkle\_root} & \texttt{[u8; 32]} & 32 \\
\texttt{poh\_hash} & \texttt{[u8; 32]} & 32 \\
\texttt{leader\_id\_com} & \texttt{[u8; 32]} & 32 \\
\texttt{timestamp} & \texttt{u64} & 8 \\
\texttt{tx\_count} & \texttt{u32} & 4 \\
\midrule
\textbf{Total header} & & $\mathbf{{\sim}256}$ \\
\bottomrule
\end{tabular}
\end{table}

\begin{table}[h]
\centering
\caption{Finality certificate format (fixed size, block-size independent).}
\label{tab:fc-format}
\begin{tabular}{@{}llc@{}}
\toprule
\textbf{Field} & \textbf{Type} & \textbf{Size (bytes)} \\
\midrule
\texttt{block\_hash} & \texttt{[u8; 32]} & 32 \\
\texttt{slot\_number} & \texttt{u64} & 8 \\
\texttt{groth16\_proof} & \texttt{[u8; 256]} & 256 \\
\texttt{public\_inputs} & \texttt{Commitment([u8; 32])} & 32 \\
\midrule
\textbf{Total} & & $\mathbf{328}$ \\
\bottomrule
\end{tabular}
\end{table}

\end{document}